\documentclass[let tersize,journal]{IEEEtran}

\usepackage{amsmath}
\usepackage{cite}
\usepackage{amssymb}
\usepackage{amsfonts}
\usepackage{graphicx}
\usepackage{textcomp}
\usepackage{xcolor}
\usepackage{bbm}
\usepackage{caption}

\usepackage{url}
\usepackage{colortbl}
\usepackage[normalem]{ulem}
\useunder{\uline}{\ul}{}


\useunder{\uline}{\ul}{}
\usepackage{mwe}
\usepackage{comment}
\usepackage[export]{adjustbox}
\usepackage{amsthm}



%
%

\def\cA{{\cal A}}

\def\bW{{\mathbf W}}

\def\by{{\mathbf y}}
\def\bx{{\mathbf x}}

\def\balpha{{\boldsymbol{\alpha}}}

\def\bomega{{\boldsymbol{\omega}}}

\def\bA{{\mathbf A}}
\def\bB{{\mathbf B}}
\def\bS{{\mathbf S}}
\def\ba{{\mathbf{a}}}

\newcommand{\projop}{{\mathcal{P}}}

\def\stochmeas{{\nu}}
\def\detmeas{{\mu}}

\def\genmeas{{\mu}}

\newtheorem{proposition}{Proposition}

\newtheorem{remark}{Remark}

\def\minwrt[#1]{\underset{#1}{\mathrm{minimize }}}

\def\mminwrt[#1]{\underset{#1}{\mathrm{min }}}

\def\maxwrt[#1]{\underset{#1}{\mathrm{maximize }}}
\def\argminwrt[#1]{\underset{#1}{\mathrm{arg min }}}

\usepackage{float}

\newcommand{\otdist}{\mathcal{S}}
\newcommand{\otdistsp}{\bS_c}

\newcommand{\harmotdist}{S_{\mathrm{h}}}

\newcommand{\groundcost}{c}

\newcommand{\bPsi}{\mathbf{\Psi}}

\newcommand{\bhr}{\mathbf{\hat{r}}}

\newcommand{\br}{\mathbf{r}}

\newcommand{\bc}{\mathbf{c}}

\newcommand{\bu}{\mathbf{u}}
\newcommand{\bv}{\mathbf{v}}

\newcommand{\bC}{\mathbf{C}}
\newcommand{\bK}{\mathbf{K}}
\newcommand{\bM}{\mathbf{M}}
\newcommand{\bQ}{\mathbf{Q}}

\newcommand{\blambda}{\boldsymbol{\lambda}}
\newcommand{\bmu}{\boldsymbol{\mu}}
\newcommand{\bxi}{\boldsymbol{\xi}}
\newcommand{\onevec}{\mathbf{1}}
\newcommand{\bnu}{\boldsymbol{\nu}}
\newcommand{\RR}{\mathbb{R}}
\newcommand{\CC}{\mathbb{C}}
\newcommand{\NN}{\mathbb{N}}
\newcommand{\EE}{\mathbb{E}}
\newcommand{\ZZ}{\mathbb{Z}}

\newcommand{\pmeas}{{\mathcal{M}_+}}
\newcommand{\cmeas}{{\mathbb{C}\mathcal{M}}}


\newcommand{\norm}[1]{\left\lVert#1\right\rVert}
\newcommand{\abs}[1]{\left|#1\right|}

 \newcommand{\indset}[1]{[\![#1]\!]}

 \newcommand{\TT}{\mathbb{T}} 

 \newcommand{\cper}{{\mathcal{C}_{\mathrm{per}}}} 

\newcommand{\genreg}{\mathcal{R}}

\newcommand{\stochreg}{\mathcal{R}^{\mathrm{stoch}}}

\newcommand{\detreg}{\mathcal{R}^{\mathrm{det}}}

\newcommand{\Tlag}{\mathcal{T}}

\newcommand{\stepsize}{\gamma}


\newcommand{\wrightOmega}{\mathcal{W}}

\begin{document}

%
\title{Inverse harmonic clustering for multi-pitch \\estimation: an optimal transport approach\thanks{The authors are with the Department of Information and Communications Engineering, Aalto University, Finland. Emails: \texttt{firstname.lastname@aalto.fi}}}

\author{\IEEEauthorblockN{Anton Bj{\"o}rkman}
and
\IEEEauthorblockN{Filip Elvander}}
\maketitle
\begin{abstract}
%
%
In this work, we consider the problem of multi-pitch estimation, i.e., identifying super-imposed truncated harmonic series from noisy measurements. We phrase this as recovering a harmonically-structured measure on the unit circle, where the structure is enforced using regularizers based on optimal transport theory. In the resulting framework, a signal’s spectral content is simultaneously inferred and assigned, or transported, to a small set of harmonic series defined by their corresponding fundamental frequencies. In contrast to existing methods from the compressed sensing paradigm, the proposed framework decouples regularization and dictionary design and mitigates coherency problems. As a direct consequence, this also introduces robustness to the phenomenon of inharmonicity. 
From this framework, we derive two estimation methods, one for stochastic and one for deterministic signals, and propose efficient numerical algorithms implementing them. In numerical studies on both synthetic and real data, the proposed methods are shown to achieve better estimation performance as compared to other methods from statistical signal processing literature. Furthermore, they perform comparably or better than network-based methods, except when the latter are specially trained on the data-type considered and are given access to considerably more data during inference.

%
%
\end{abstract}
\begin{IEEEkeywords}
Multi-pitch estimation, fundamental frequency, inharmonicity, optimal transport, spectral estimation
\end{IEEEkeywords}
\section{Introduction}
Fundamental frequency, or pitch, estimation is a common task in a variety of applications within signal processing, including speech processing \cite{norholm2016instantaneous}, music audio signal processing \cite{Muller2011, duan2011soundprism}, and biomedical modeling \cite{4636708}. The task can be divided into two classes: single-pitch estimation, and multi-pitch estimation, with the latter being significantly more challenging, as well as less studied. Single-pitch estimation aims to identifying the fundamental frequency of one signal that consists of a single harmonic series, as seen in cases like a single instrument or for human speech with only one active speaker. In contrast, multi-pitch estimation arises naturally when there are multiple instruments playing or when there are several speakers. For the single-pitch problem, there are several methods available. Examples of these include YIN \cite{Cheveigne2002YIN}, RAPT \cite{talkin1995robust} and SWIPE \cite{camacho2008sawtooth}, where both YIN and RAPT are based on the autocorrelation of the signal. SWIPE instead attempts to find the best match for the fundamental frequency of a sawtooth waveform that best describes the spectrum of the input signal.
For the multi-pitch problem (for a general overview, see \cite{christensen2022multi}), some recent methods are parametric methods based on dictionaries whose atoms correspond to harmonic series, such as PEBS \cite{adalbjornsson2013PEBS}, and the signal components are identified by minimizing sparsity-promoting optimization criteria, often expressed as a regularized least-squares problem. An extension of PEBS was introduced in \cite{ADALBJORNSSON2015236} as PEBS$_2$TV, adding a total variation term to deal with subharmonic errors in the model. An additional method, PEBSI-lite, was introduced in \cite{ELVANDER201656}, improving upon PEBS$_2$TV by better approximating one of the penalty terms as well as introducing self-regularization and an adaptive dictionary construction. Another method based on the same type of dictionary is PE-BSBL-Cluster (herein denoted PE-BSBL-C) \cite{shi2018multipitch}, which instead uses Bayesian learning with a block sparse prior, as well as intra-block clustering to deal with subharmonic errors. Here, the estimation is based on assumptions of the harmonic structure of the signal, such as an added penalty if the first harmonic is missing, or if nearby harmonics are missing. One of the drawbacks of these methods is that the dictionary atoms become highly coherent because of the overlap of the harmonics of different candidate pitches. Furthermore, these methods assume perfect harmonicity among each pitch, making them susceptible to errors in the face of inharmonicity, i.e., when individual sinusoids deviate from the harmonic model, as is the case, for example, with string instruments \cite{fletcher1964normal, fletcher2012physics, rasch1985string} and to some extent in human speech \cite{fernandes2018harmonic}.

Recently, pitch estimators based on deep learning have been proposed, showing promising results both for the single-pitch and multi-pitch problems. Estimators for the single-pitch problem include CREPE \cite{kim2018crepe}, Harmof0 \cite{wei2022harmof0} and SPICE \cite{gfeller2020spice}, while examples for the multi-pitch problem are DeepSalience \cite{bittner2017deep} as well as three models from \cite{cuesta2020multiple}, building upon the former. While these methods have proven effective for pitch estimation, they inherently impose certain limitations on the input. For example, if a model is trained on a specific type of data, such as speech, it may not be effectively detect pitches in other signal types like music. Moreover, parameters such as sampling rate, frame length, frequency range, and hop size are often enforced by the structure of the model itself, or required to match that of the training data. As a result, if one would want to change these parameters, the model would either need to be retrained, or a pre-processing step, such as re-sampling the signal, would need to be applied to the signal, as is commonly done for these estimators. 

In similarity to the previously mentioned dictionary-based multi-pitch estimators, our proposed work uses a similar dictionary in order to form our estimates. However, in contrast to the other dictionary-based methods, we in this work propose to both alleviate the need of pitch-based dictionaries by formulating the multi-pitch estimation through a dictionary, but instead of having the atoms correspond to complete harmonic series, the atoms in our dictionaries correspond to Fourier basis functions, thus removing the harmonic overlap between atoms. In particular, the aim is to infer the spectral content of the observed signal, i.e., the power (or amplitudes) and frequencies of individual sinusoids, while simultaneously assigning these into a small number of groups, corresponding to pitches. This grouping is performed on the basis of the Monge-Kantorovich problem of optimal transport (OT) \cite{villani2009optimal}. OT-based problems have previously been used for spectral estimation (see, e.g., \cite{georgiou2008metrics, elvander2018interpolation,elvander2020multi,Haasler2024}), as well as in the context of pitch estimation \cite{flamary2016optimal,elvander2017using, elvander2023estimating, Elvander2020HarmonicDef}. However, to the best of the authors' knowledge, our previous work \cite{bjorkman2025robust} was the first work to propose an OT-based estimator for solving the multi-pitch estimation as an \textit{inverse} problem, where only noisy observations of the signal waveform are known, of which the work proposed herein is expanded upon.

In this paper, we aim to build upon our prior work in \cite{bjorkman2025robust}. This work differs from our previous work in the following aspects: firstly, we define the problem in a more general form, which not only allows us to more precisely articulate the specific problem we aim to solve, but also allows us to return to it later: we first solve the pitch estimation problem, return to the general form, and then re-estimate the frequency structure of the signal. Secondly, we introduce a new estimation formulation, based on the auto-covariance of the measurements. Thirdly, we conduct more thorough experiments than in our previous work, here including real data as well as a Monte Carlo type study. We also provide a comparison with deep network-based methods, which was not included in our previous work, wherever applicable.

\subsection{Notation}
Unless explicitly stated otherwise, we denote scalars, column vectors, and matrices by lower-case, lower-case bold, and upper-case bold letters such as $a$, $\ba$, and $\bA$. Let $(\cdot)^T$, $(\cdot)^H$, and $\overline{(\cdot)}$ denote transpose, Hermitian transpose, and complex conjugation, respectively. For $F \in \NN$, we use the short-hand $\indset{F} \triangleq \{ 1,2,\ldots, F\}$. We let $\TT \triangleq [-\pi,\pi)$, which we identify with the unit circle. Furthermore, let $\cper(\TT)$ be the set of continuous, complex-valued, 2$\pi$-periodic functions on $\TT$. We let $\cmeas(\TT)$ denote the set of linear bounded functionals on $\cper(\TT)$, i.e., the dual space of $\cper(\TT)$. In particular, this is the set of complex-valued measures on $\TT$ with finite total variation. Let $\pmeas(\TT)$ denote the subset of $\cmeas$ consisting of real, non-negative measures. For $f \in \cper(\TT)$ and $\mu \in \cmeas(\TT)$, we use the notation $\langle \mu, f\rangle $ for the application of the functional $\mu$ on $f$, i.e.,
\begin{align*}
    \langle \mu, f\rangle = \int_\TT f(\omega) d\mu(\omega).
\end{align*}
For Hilbert spaces, we use $\langle \cdot, \cdot\rangle$ to denote the standard inner product. In particular, for two matrices $\bA, \bB \in \CC^{m \times n}$, $\langle \bA, \bB \rangle = \text{trace}\left(\bA \bB^H\right)$. For scalars $a$, $\abs{a}$ denotes the absolute value, whereas for a vector $\ba$, the absolute value is applied elementwise, as in $\abs{\ba}$. Furthermore, for $\mu \in \cmeas(\TT)$, $\abs{\mu}$ denotes the variation, i.e., for any $\mathcal{E} \subset \TT$
\begin{align*}
    \abs{\mu}(\mathcal{E}) = \sup \sum_{p} \abs{\mu(\mathcal{E}_p)},
\end{align*}
where the supremum is over all partitions of $\mathcal{E}$. Note that $\abs{\mu} \in \pmeas(\TT)$, and if $\mu \in \pmeas(\TT)$, then $\abs{\mu} = \mu$.

\section{Signal model}
Consider the discrete-time, complex-valued signal\footnote{For generality, we here consider the complex-valued representation. However, this can easily be formed as the discrete-time analytical version of a real-valued signal \cite{marple1999computing}.} $y$ on the form
\begin{align} \label{eq:noisy_signal}
    y_t = x_t + v_t,\; t = 0,1,2,\ldots
\end{align}
where $x$ is the signal of interest, and $v$ is an additive noise term, assumed to be circularly symmetric white Gaussian noise with some variance $\sigma_v^2$, i.e., $v_t \sim \CC\mathcal{N}(0,\sigma_v^2)$ and independent for different $t$. We here assume that $x$ is an inharmonic multi-pitch signal. That is, $x$ can be decomposed into $K \in \NN$ components according to
\begin{align}\label{eq:almost_harmonic_signal}
    x_t = \sum_{k = 1}^K x_t^{(k)},
\end{align}
where each $x^{(k)}$ is an approximately periodic, or \textit{inharmonic}, waveform. In particular, for some harmonic order $L^{(k)} \in \NN$, each component is of the form
\begin{align} \label{eq:component_waveform}
    x_t^{(k)} = \sum_{\ell = 1}^{L^{(k)}} \alpha_\ell^{(k)} e^{i\omega_\ell^{(k)} t},
\end{align}
where $\alpha_\ell^{(k)} \in \CC$, $\ell \in \indset{L^{(k)}}$, are complex amplitudes, and the frequencies $\omega_\ell^{(k)} \in \TT$ form an approximately harmonic set:
\begin{align} \label{eq:inharm_set}
    \omega_\ell^{(k)} \approx \ell \omega_0^{(k)}
\end{align}
for some $\omega_0^{(k)} \in \TT$. The approximate equality in \eqref{eq:inharm_set} should here be interpreted as the existence of so-called inharmonicity parameters $\Delta_\ell^{(k)}$, with $\abs{\Delta_\ell^{(k)}} \ll \omega_0^{(k)}$, such that
\begin{align*}
    \omega_\ell^{(k)} = \ell \omega_0^{(k)} + \Delta_\ell^{(k)}.
\end{align*}
For an in-depth discussion on how to stringently define a concept of "fundamental frequency" $\omega_0^{(k)}$ for inharmonic signals, see \cite{Elvander2020HarmonicDef}. Then, given a finite set of noisy samples of the signal waveform $x$, i.e., observations $y_t$ from \eqref{eq:noisy_signal} for $t = 0,1,\ldots,N-1$ for a finite $N \in \NN$, we wish to find a decomposition \eqref{eq:almost_harmonic_signal} and to estimate the set of fundamental frequencies $\{ \omega_0^{(k)} \}_{k=1}^K$. We refer to this problem as inharmonic multi-pitch estimation, where each component $x^{(k)}$ is referred to as a pitch.
\subsection{Prior art and motivation}
In the perfectly harmonic case, i.e., when the relation \eqref{eq:inharm_set} holds with equality, the state-of-the-art methods for multi-pitch estimation in the statistical signal processing literature revolve around the idea of parsimonious signal representations, and in particular the concept of compressed sensing. In broad terms, many methods, such as the methods in \cite{adalbjornsson2013PEBS, ADALBJORNSSON2015236, ELVANDER201656}, solve problems of the form
\begin{align} \label{eq:generic_estimation}
    \minwrt[\balpha] \quad\frac{1}{2} \norm{ \by - \bA \balpha }_2^2 + \genreg(\balpha),
\end{align}
where $\by = \begin{bmatrix} y_0 & y_1 & \ldots & y_{N-1} \end{bmatrix}^T$ is the collection of signal samples, $\bA$ is finite dictionary used for approximating the signal, and $\balpha$ is the corresponding set of coefficients. Furthermore, $\genreg$ is a regularization function that, together with a properly chosen structure of $\bA$, is designed to promote a sparse pitch-structure on the estimate $\bA \balpha$. Typically, the "atoms" of the dictionary $\bA$ are blocks of column vectors corresponding to a truncated harmonic series. That is, for such a block $\bA^{(k)} \in \CC^{N \times L^{(k)}}$ and a corresponding coefficient vector $\balpha^{(k)} \in \CC^{L^{(k)}}$, the vector $\bA^{(k)} \balpha^{(k)}$ represents samples from (perfectly harmonic) signals on the form \eqref{eq:almost_harmonic_signal}. The regularizer $\genreg$ is then designed as to promote a block-sparse structure on the full coefficient vector $\balpha$. The estimated fundamental frequencies $\{\omega_0^{(k)}\}_{k = 1}^K$ are simply found by identifying the corresponding non-zero blocks of $\balpha$. Although the method presented in \cite{shi2018multipitch} does not follow the structure of \eqref{eq:generic_estimation}, the same type of dictionary is used, with sparsity enforced in a similar manner, but through the use of a Bayesian framework. 
\begin{figure}[t]
    \centering
    \includegraphics[width = \linewidth]{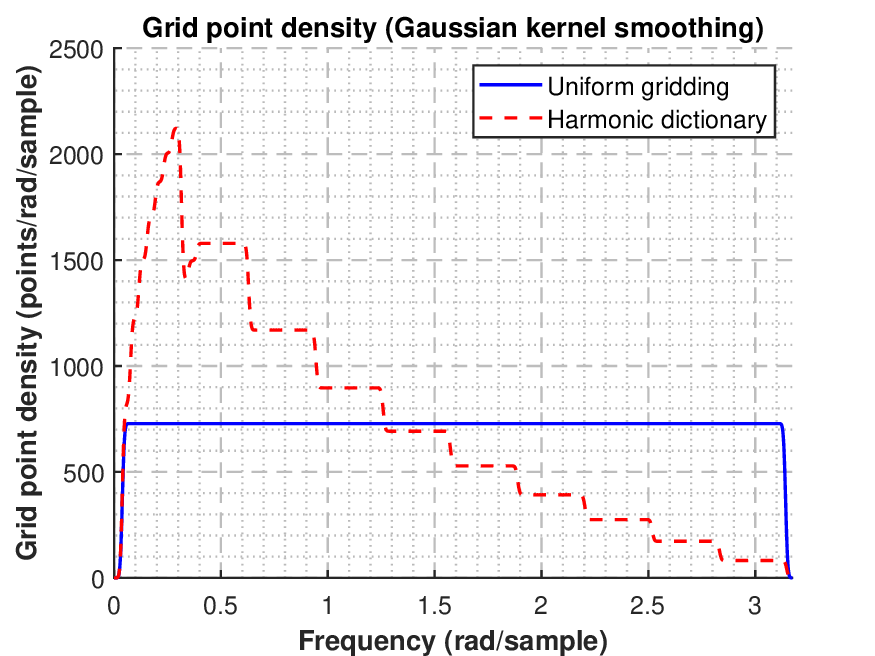}
    \caption{ Example of the frequency grid point density. The case of uniform gridding is shown in blue, while the frequency grid used for the harmonic dictionaries is shown in red. Both grids have the same number of total grid points.
    }  
    \label{fig:ngridp}\vspace{-1mm}
\end{figure}
Although these methods can yield high-accuracy estimates, they are susceptible to the well-known off-grid effect \cite{ChiSPC11_59}. This may be countered by selecting a fine-enough set of atoms of $\bA$, i.e., candidate pitch-blocks $\bA^{(k)}$. However, achieving a desired coverage of the interval\footnote{It may be noted the sampling of the frequency axis implied by the full set of harmonics in $\bA$ is highly non-uniform.} $\TT$ leads to a large and highly coherent dictionary $\bA$. This affects both estimation performance, as well as computational burden when solving \eqref{eq:generic_estimation}. Furthermore, this rigid structure does not allow for deviations from the perfectly harmonic assumption, such as the inharmonicities considered in this work. It may be noted that also grid-free and grid-less methods have been proposed, building on the ideas of atomic norm denoising \cite{ChandrasekaranRPW12_12,BhaskarTR13_61}. These, however, require the solution of large semi-definite programs \cite{JensenV17_icassp} or rely on non-convex heuristics \cite{swardlj18_26}. Furthermore, neither these accommodate deviations from perfect harmonicity.

In contrast, we in this work propose to phrase multi-pitch estimation as the problem of recovering a measure from noisy indirect measurements, where we design regularizing functions $\genreg$ imposing a pitch structure on the sought measure. As will later be shown, by leveraging the concept of optimal transport, this allows us to decouple the measurement mechanism, i.e., the mapping taking the signal "coefficients" to the observations, from the regularizer. This is in stark contrast with the above mentioned methods, in which both measurement and regularization are inherently linked to the structure of the dictionary $\bA$. As an illustration, Figure~\ref{fig:ngridp} shows the sampling of the frequency axis resulting from dictionaries with harmonic series as atoms. As can be seen, this results in a highly non-uniform frequency gridding. In addition to this, it should be emphasized that this type of dictionary will also result in duplicated grid points due to harmonic overlap. As reference, the uniform sampling of a Fourier vector dictionary, as used by the methods proposed herein, is shown. 

As noted, we will construct our regularizer based on the concept of optimal transport. Before doing this, we will connect the signal $x$ to complex-valued measures on $\TT$, i.e., elements of $\cmeas(\TT)$. This will allow us to talk about notions of "mass" for $x$, which naturally leads to transport formulations. We will do this in two different settings: one stochastic and one deterministic. Although related, the two different views ultimately lead to different regularizers and estimators.
\subsection{Stochastic model}
Assume that the complex amplitudes $\alpha_\ell^{(k)}$ in \eqref{eq:component_waveform} are random variables parametrized as $\alpha_\ell^{(k)} = \sigma_\ell^{(k)}z_\ell^{(k)}$, where $\sigma_\ell >0$ are deterministic, and $z_\ell$ are independent, zero-mean, circularly symmetric random variables with unit variance, i.e.,
\begin{align*}
    \EE\left( z_\ell^{(k)} \right) = 0, \; \EE\left( \abs{z_\ell^{(k)}}^2\right) = 1,
\end{align*}
where $\EE(\cdot)$ is the expectation operator. Then, the signal $x$ is a wide-sense stationary stochastic process with covariance function $r: \ZZ \to \CC$ defined as
%
\begin{align*}
    r(\tau) \triangleq \EE\left( x_t \overline{x_{t-\tau}} \right) = \sum_{k = 1}^{K}\sum_{\ell = 1}^{L^{(k)}}\ \sigma_\ell^2 e^{i\omega_\ell^{(k)} \tau}, \; \tau \in \ZZ.
\end{align*}
With this, we can associate $x$, or $r$, with a non-negative measure $\stochmeas$ on $\TT$, i.e., $\stochmeas \in \pmeas(\TT) \subset \cmeas(\TT)$, in the sense that
\begin{align}
    r(\tau) = \langle \stochmeas, e^{i(\cdot) \tau} \rangle = \int_{\TT} e^{i\omega \tau} d\stochmeas(\omega), \; \tau \in \ZZ.
    \label{eq:val_of_linear_functional_stoch}
\end{align}
That is, $\nu$ generates the covariance function by acting on functions $f_\tau(\omega) = e^{i\omega \tau}$, indexed by $\tau$. Specifically,
\begin{align}
\label{eq:meas_harm}
    \stochmeas = \sum_{k=1}^K\sum_{\ell = 1}^{L^{(k)}} \left(\sigma_\ell^{(k)}\right)^2\delta_{\omega_\ell^{(k)}},
\end{align}
where $\delta_{q}$ is the Dirac measure at $q \in \TT$. This is simply stating the Wiener-Khinchin theorem, and $\stochmeas$ is the power spectrum of $x$. Then, if we consider a finite set of observations at lags $\tau = 0,1,\ldots,\Tlag-1$, with $\Tlag \in \NN$, for the covariance function $r$, we can define the linear operator $\cA : \cmeas(\TT) \to \CC^\Tlag$ such that $\left[ \cA(\stochmeas) \right]_\tau = \langle \stochmeas, e^{i(\cdot) \tau} \rangle$, i.e., 
\begin{align} \label{eq:cov_relation}
    \br = \cA(\stochmeas),
\end{align}
where $\br = \begin{bmatrix} r(0) & r(1) & \ldots  & r(\Tlag -1)\end{bmatrix}^T$. Thus, if we have access to $\br$, or an estimate there-of, one could imagine retrieving $\stochmeas$, and thus the frequency content of $x$, by inverting the relation \eqref{eq:cov_relation}. If the observation $\br$ is noise-free and if $\Tlag > \sum_{k=1}^K L^{(k)}$, this is indeed possible by the Caratheodory-Fejér theorem \cite{Grenander1958}, as is done for instance in the MUSIC algorithm \cite{schmidt1986multiple}. In practice, $\br$ is subject to noise, and we need to regularize the relation in order to allow for inversion.

Note here that the spectrum $\stochmeas$ can be interpreted as a distribution of mass over $\TT$. Furthermore, $\stochmeas$ can be decomposed into $K$ approximately harmonic component measures $\stochmeas^{(k)} \in \pmeas(\TT)$ as
\begin{align} \label{eq:stoch_decomposition}
    \stochmeas = \sum_{k=1}^K \stochmeas^{(k)}, \text{ with } \stochmeas^{(k)} = \sum_{\ell = 1}^{L^{(k)}} \left(\sigma_\ell^{(k)}\right)^2\delta_{\omega_\ell^{(k)}}.
\end{align}
Note that from $\stochmeas^{(k)}$, the frequency content of the corresponding $x^{(k)}$ is directly available. This additive decomposition and mass distribution interpretation will allow us to leverage optimal transport theory in order to retrieve a "multi-pitch structured" $\stochmeas$.

\subsection{Deterministic model}
Herein, we take the alternative view that the complex amplitudes $\alpha_l^{(k)}$ are deterministic constants. Although we lose the concepts of covariance function and power spectrum, we can still associate the signal $x$ with an element of $\cmeas(\TT)$. In fact, this measure $\detmeas \in \cmeas(\TT)$ now generates the signal samples themselves, i.e.,
\begin{align*}
    x_t = \langle \detmeas, e^{i(\cdot) t} \rangle = \int_{\TT} e^{i\omega t} d\detmeas(\omega), \; t \in \NN,
\end{align*}
by acting on the functions $f_t(\omega) = e^{i\omega t}$, indexed by the time $t$. Here, $\detmeas$ is given by
\begin{align} \label{eq:deterministic_measure}
    \detmeas =  \sum_{k=1}^K\sum_{\ell = 1}^{L^{(k)}} \alpha_\ell^{(k)}\delta_{\omega_\ell^{(k)}},
\end{align}
where, similar to the stochastic case, we note that $\detmeas$ may be decomposed into $\detmeas^{(k)} \in \cmeas(\TT)$ according to
\begin{align*}
    \detmeas =  \sum_{k=1}^K \detmeas^{(k)}, \text{ with } \detmeas^{(k)} = \sum_{\ell = 1}^{L^{(k)}} \alpha_\ell^{(k)}\delta_{\omega_\ell^{(k)}}.
\end{align*}
If we have access to $N \in \NN$ observations of $x_t$, $t = 0,1,\ldots, N-1$, we may analogously to the stochastic case define a linear operator $\cA: \cmeas(\TT) \to \CC^N$ such that
\begin{align} \label{eq:signal_relation}
    \bx = \cA(\detmeas)
\end{align}
where $\bx = \begin{bmatrix} x_0 & x_1 & \ldots & x_{N-1} \end{bmatrix}^T$. It may here be noted that the problem of inverting the relation \eqref{eq:signal_relation} under the assumption of $\mu$ consisting of a small set of point-masses corresponds to the classical works on super resolution spectral estimation \cite{candes2014towards} via atomic norm/total variation minimization \cite{condat2020atomic}. If $\TT$ is discretized, this simply corresponds to $\ell_1$ minimization. 
For our purposes, we are interested in connecting the signal $x$ to a mass distribution. It may be noted that although $\detmeas \in \cmeas(\TT)$, it is not an element of $\pmeas(\TT)$. However, we may construct such an element by the variation $\abs{\detmeas}~\in~\pmeas(\TT)$. Then, if the supports of the Dirac measures in \eqref{eq:deterministic_measure} are completely disjoint, $\abs{\detmeas}$ is simply given by
\begin{align} \label{eq:simple_tv_measure}
    \abs{\detmeas} =  \sum_{k=1}^K\sum_{\ell = 1}^{L^{(k)}} \abs{\alpha_\ell^{(k)}}\delta_{\omega_\ell^{(k)}}.
\end{align}
If this is the case, then $\abs{\detmeas}$ can be decomposed into $K$ component measures $|\detmeas^{(k)}| \in \pmeas(\TT)$ as
\begin{align} \label{eq:det_decomposition}
    \abs{\detmeas} = \sum_{k=1}^K |\detmeas^{(k)}|
\end{align}
where, as the notation suggests, $|\detmeas^{(k)}|$ is indeed the variation of $\detmeas^{(k)}$. However, if the support of some Dirac measures would overlap, the representation in \eqref{eq:simple_tv_measure} is no longer valid, nor is the decomposition into the $K$ components. Instead, we get the inequality
\begin{align}\label{eq:tv_inequality}
    \abs{\detmeas}(\mathcal{E}) \leq \sum_{k=1}^K |\detmeas^{(k)}|(\mathcal{E})
\end{align}
which holds for any $\mathcal{E} \subset \TT$. Although this is a subtle difference, it will have practical consequences in the design of the algorithms implementing our proposed estimators.

To summarize, for both the stochastic and deterministic signal models, we can connect signal observations to elements of $\cmeas(\TT)$ via relations \eqref{eq:cov_relation} and \eqref{eq:signal_relation}, with complex measures encoding the signal's frequency content. The aim is then to recover the (approximately) harmonic structures of $\stochmeas$ and $\detmeas$. The tool for achieving this will be regularizers constructed based on the concept of optimal transport, described next.
\section{Optimal transport}
Consider two general, non-negative measures, or mass distributions, $\genmeas_1, \genmeas_2 \in \pmeas(X)$ defined on some space $X$, with same total mass, i.e., $\genmeas_1(X) = \genmeas_2(X)$. Then, the Monge-Kantorovich problem of OT considers the most cost-efficient way of transporting, or morphing, $\genmeas_1$ as to be identical to $\genmeas_2$. The cost of moving one unit-mass from a support-point $\omega_1$ of $\genmeas_1$ to a corresponding support-point $\omega_2$ of $\genmeas_2$ is given by $\groundcost(\omega_1,\omega_2)$, where $\groundcost: X\times X \to \RR$ is a so-called ground-cost function. For example, if $X = \RR^d$ for some $d \in \NN$, then $\groundcost$ is typically picked as the squared Euclidean distance. The problem of minimizing the total cost of rearrangement is then stated as
\begin{equation} \label{eq:basic_ot}
\begin{aligned}
    \minwrt[m \in \pmeas(X\times X)]& \left\{ \langle m, c \rangle \triangleq \int_{X\times X} c(\omega_1,\omega_2) dm(\omega_1,\omega_2) \right\}
    \\ \text{s.t. }& \int_{\mathcal{E}_1 \times X} dm(\omega_1,\omega_2) = \int_{\mathcal{E}_1} d\mu_1(\omega),
    \\& \int_{X \times \mathcal{E}_2} dm(\omega_1,\omega_2) = \int_{\mathcal{E}_2} d\mu_2(\omega),
    \\& \text{for all sets }\mathcal{E}_1,\mathcal{E}_2 \subset X.
\end{aligned}
\end{equation}
Here, the transport plan $m$ describes how mass is moved from $\genmeas_1$ to $\genmeas_2$, with the constraints ensuring that $m$ indeed describes transport between the two distributions. This can be phrased as $m$ having margins $\mu_1$ and $\mu_2$, and we use the short-hand $\projop_1(m) = \genmeas_1$ and $\projop_2(m) = \genmeas_2$, where the projection operators $\projop_1$ and $\projop_2$ are uniquely defined by the constraints in \eqref{eq:basic_ot}.

The minimal objective in \eqref{eq:basic_ot} serves as a notion of distance or dissimilarity on $\pmeas(X)$, which has been exploited in problem machine learning and signal processing (see \cite{KolouriPTSR17_34} for a review). Most closely related to the multi-pitch estimation problem considered in this work, OT has been earlier used for solving inverse problems of spectral estimation \cite{elvander2020multi,elvander2020multi}. In contrast to the problem \eqref{eq:basic_ot}, we are in this work interested in the setting where we can only observe one of the marginals indirectly according to \eqref{eq:cov_relation} or \eqref{eq:signal_relation} and where the \textit{second margin is free or unconstrained} in the sense that it is not related to observations or measurements. Instead, we will extend the transport problem \eqref{eq:basic_ot} as to allow us to ultimately recover decompositions on the form \eqref{eq:stoch_decomposition} or \eqref{eq:det_decomposition}.
%
%

\section{Optimal transport for harmonic spectra}
Consider any $\genmeas \in \pmeas(\TT)$. Building on \cite{Elvander2020HarmonicDef}, we may quantify how close $\genmeas$ is to being the power spectrum of a harmonic signal with a certain fundamental frequency $\omega_0'$ by means of an OT problem. In particular, let $\groundcost: \TT\times\TT \to \RR$ be a ground-cost function, where $\groundcost(\omega,\omega_0)$ is the cost of associating the frequency $\omega$ with a fundamental frequency $\omega_0$. Then, we may formally solve
\begin{equation} \label{eq:single_pitch_transport}
\begin{aligned}
     \minwrt[m \in \pmeas(\TT\times\TT)] \quad \langle m, c\rangle, \text{ s.t. } \projop_1(m) = \genmeas,\; \projop_2(m) = \mu(\TT)\delta_{\omega_0'}.
\end{aligned}
\end{equation}
Note here that the first margin of the transport plan $m$ co-incides with $\genmeas$, whereas the second margin has all its mass located at the single point $\omega_0'$. Then, if $m^\star$ is a solution to \eqref{eq:single_pitch_transport},  $\harmotdist^c(\omega_0',\genmeas) \triangleq \langle m^\star, c\rangle$ serves as a notion of distance between $\genmeas$ and (any) harmonic spectrum with fundamental frequency $\omega_0'$. It may be noted that this minimal objective can be computed in closed form. In particular,
\begin{align*}
    \harmotdist^c(\omega_0',\genmeas) = \langle m^\star, c\rangle = \int_\TT c(\omega,\omega_0')d\genmeas(\omega).
\end{align*}
It should here be stressed that $\projop_2(m^\star)$ is not the spectrum of a harmonic signal. However, such a spectrum can be recovered from the pair $(m^\star, c)$, or from $(\genmeas,c)$, and depends on the specific construction of the ground-cost $c$. In particular, in \cite{Elvander2020HarmonicDef}, the squared distance to the closest harmonic was used, i.e., $\groundcost = \hat{\groundcost}$, where
\begin{align}
    \hat{\groundcost}(\omega,\omega_0) = \min_{k \in \ZZ_+} (\omega- k\omega_0)^2,
\label{eq:gc_old}
\end{align}
where $\ZZ_+$ is the set of positive integers. The implied harmonic spectrum is then given by
\begin{align*}
    \genmeas_{\text{h}} = \sum_{k} \sigma^2_k\delta_{k \omega_0}, \text{ with }\sigma^2_k = \int_{(k-1/2)\omega_0}^{(k+1/2)\omega_0 } d\genmeas(\omega).
\end{align*}
It may here be noted that if $\genmeas$ contains point-masses on the interval edges, then $\genmeas_\text{h}$ is not unique. The same ground-cost was used in \cite{flamary2016optimal} in the context of music transcription. Although the choice $\hat{\groundcost}$ is intuitively reasonable, it leads to a strong preference for low fundamental frequencies $\omega_0$ if one is to use $\harmotdist^{\hat{\groundcost}}$ to choose between different $\omega_0$ fitting $\genmeas$. In particular, it may be readily verified that for any $\genmeas \in \pmeas(\TT)$ and $\omega_0 \in \TT$,
\begin{align*}
    \harmotdist^{\hat{\groundcost}}(\omega_0/2, \genmeas) \leq \harmotdist^{\hat{\groundcost}}(\omega_0, \genmeas),
\end{align*}
which is a direct consequence of $\hat{\groundcost}(\omega_0/2,\omega) \leq \hat{\groundcost}(\omega_0,\omega)$ for all $\omega \in \TT$. This inevitably leads to so-called sub-octave errors and is typically countered using heuristic rules for limiting the maximal harmonic order or imposing additional structure on the implied spectrum $\genmeas_\text{h}$. In this work, we propose to remedy this by a simple normalization and define the ground-cost as
\begin{align}\label{eq:our_groundcost}
    \groundcost(\omega,\omega_0) = \frac{1}{\omega_0^2} \hat{\groundcost}(\omega,\omega_0) = \min_{k \in \ZZ_+} \left( \frac{\omega}{\omega_0}- k\right)^2.
\end{align}
Note that, just as for $\hat{\groundcost}$, $\groundcost$ induces transport plans $m$ where mass is transported to the closest harmonic of a given $\omega_0$. Thus, $\groundcost$ and $\hat{\groundcost}$ imply the same harmonic spectrum $\genmeas_\text{h}$. However, the corresponding $\harmotdist^{{\groundcost}}$ shows preference for $\omega_0$ over $\omega_0/2$ when applied to inharmonic signals. To see this, consider an inharmonic component located at $\omega_k = k\omega_0 + \Delta$ for some $k$, where $\Delta$, with $\abs{\Delta} < \omega_0/2$, is the inharmonic deviation. Then, we have that $\groundcost(\omega_k, \omega_0) <  \groundcost(\omega_k, \omega_0/2)$ if and only if $\abs{\Delta} < \omega_0/3$. That is, in order to prefer $\omega_0/2$ over $\omega_0$, the inharmonic deviations have to be substantial. An illustration of this is presented in Figure~\ref{fig:cost_example_2}, where both functions are plotted for the pitch candidates $\omega_0$ and $\omega_0/2$. 
Thus, for the proposed ground-cost function, the optimal solution with small harmonic deviations is obtained by the model with the \textit{highest possible fundamental frequency} that can, in close proximity, describe all harmonics.
%
\begin{figure}[t]
    \centering
    \includegraphics[width = \linewidth]{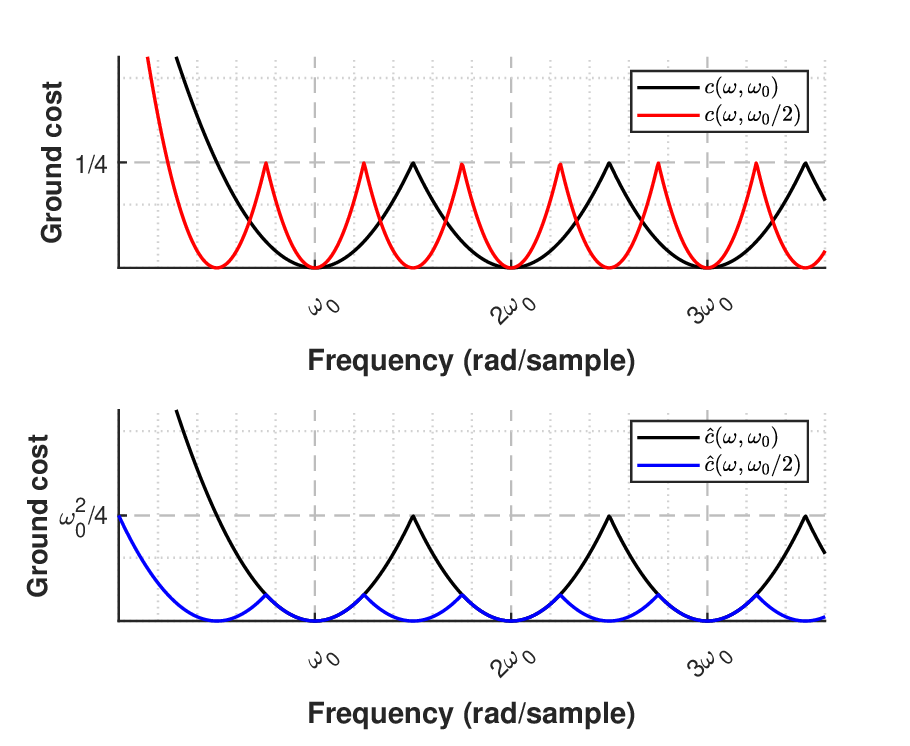}
    \caption{ Example plot of the ground-cost function in \eqref{eq:gc_old} in the lower plot, and example plot of the proposed ground-cost function in \eqref{eq:our_groundcost} in the upper plot. }  
    \label{fig:cost_example_2}\vspace{-1mm}
\end{figure}

\section{Harmonic clustering}
In this section, we derive the proposed multi-pitch estimators, one for the stochastic and one for the deterministic case. For clarity of exposition, we focus the discussion on the simpler stochastic setting and then show the necessary modifications for the deterministic one.
\subsection{Ideal problem and convex relaxation}
For the model in \eqref{eq:almost_harmonic_signal}, assume that the number $K$ of pitch components is known. Letting $\bomega_0 = \{ \omega_0^{(k)} \}_{k=1}^K$ be the set of (unknown) fundamental frequencies, we would like to solve
\begin{align}\label{eq:ideal_clustering_and_estimation}
    \minwrt[\stochmeas^{(k)} \in \pmeas(\TT),\; \bomega_0]\quad \frac{1}{2} \norm{ \br - \cA(\stochmeas))}_2^2 + \sum_{k=1}^K \harmotdist^{\groundcost}(\omega_0^{(k)}, \stochmeas^{(k)}),
\end{align}
where $\stochmeas = \sum_{k=1}^K \stochmeas^{(k)}$ and where the first data-fit term takes into account that $\br$ will in practice be a noisy estimate of the finite-length covariance sequence. Note that each term of the regularizing function quantifies how well a component measure, or spectrum, $\stochmeas^{(k)}$ fits a harmonic spectrum with fundamental frequency $\omega_0^{(k)}$. It may be verified that although this problem is convex in the set of $\stochmeas^{(k)}$, it is not jointly convex in $(\{\stochmeas^{(k)} \}_k,\bomega_0)$. In fact, it is not even marginally convex in $\bomega_0$ due to the structure of the ground-cost $\groundcost$ in \eqref{eq:our_groundcost}. Although, one could envision addressing this problem with the iterative tools from \cite{elvander2025mixtures}, this would require a good initial point for $\bomega_0$. To work around this, consider the case when $\bomega_0$ is fixed and \eqref{eq:ideal_clustering_and_estimation} is minimized with respect to the $\stochmeas^{(k)}$. This can equivalently be formulated as
\begin{equation} \label{eq:non_sparse_stoch_ot}
\begin{aligned}
    \minwrt[m \in \pmeas(\TT \times \Omega)]\quad \frac{1}{2} \norm{ \br - \cA(\projop_1(m))}_2^2 + \langle m,c\rangle
\end{aligned}
\end{equation}
where the transport plan $m \in \pmeas(\TT \times \Omega)$ describes transport between $\TT$ and the discrete grid $\Omega = \{ \omega_0^{(1)}, \ldots, \omega_0^{(K)}\}$. This is a direct extension of the single-pitch problem in \eqref{eq:single_pitch_transport}; if $\Omega$ is a singleton set then the problems are the same. It may be noted that the individual $\stochmeas^{(k)}$ can be retrieved from $m$ as
\begin{align*}
    \stochmeas^{(k)} = \projop^{(g)}(m)
\end{align*}
where the projection $\projop^{(g)}: \pmeas(\TT\times\Omega) \to \pmeas(\TT)$ is defined by the relation
\begin{align*}
    \projop^{(g)}(m)(\mathcal{E}) = m(\mathcal{E} \times \{\omega_0^{(k)}\}), \; \forall \mathcal{E} \subset \TT.
\end{align*}
Intuitively, the projected, or \emph{restricted}, measure $\projop^{(g)}(m)$ can be thought of as fixing the second argument of the transport plan $m$ to $\omega_0^{(k)}$.

With these constructions, we may extend $\Omega$ to a large grid $\Omega_G$ consisting of $G \gg K$ fundamental frequencies. We would want the support over $\Omega_G$ to be sparse, i.e., $m^{(g)}(\TT) = 0$ for most $g \in \indset{G}$, an illustrative example of which can be seen in Figure \ref{fig:transplan}. As to promote transport plans $m$ that are sparse in this sense, while avoiding the combinatorial problem arising for counting the number of non-zero $\projop^{(g)}(m)$, we propose to use the convex proxy

\begin{align*}
    \norm{m}_{\infty,1} = \sum_{g=1}^G \norm{ \projop^{(g)}(m)}_\infty,
\end{align*}
where we with some slight abuse of notation over-load the mixed-norm notation often used for matrices. For elements $\stochmeas \in \pmeas$ containing only point-masses, we by $\norm{\stochmeas}_{\infty}$ mean
\begin{align*}
    \norm{\stochmeas}_{\infty} = \max_{\omega\in \TT} \:\stochmeas(\omega),
\end{align*}
corresponding to the coefficient of the largest point-mass. It may be noted that this can be directly extended to the case when $\stochmeas$ also contains a component with a density.

With this, we are ready to state our proposed multi-pitch estimator for the stochastic setting. We define it as the solution to the optimization problem
\begin{equation} \label{eq:stoch_estimator}
\begin{aligned}
    \minwrt[m \in \pmeas(\TT \times \Omega_G)]\: \frac{1}{2} \norm{ \br \!-\! \cA(\projop_1(m))}_2^2 + \zeta\!\left(\langle m,c\rangle + \eta\norm{m}_{\infty,1} \right),
\end{aligned}
\end{equation}
where $\zeta$ and $\eta$ are positive regularization parameters controlling the trade-off between data-fit and regularization, and sparsity of the transport plan. Note here that increasing $\eta$ has the effect of promoting transport plans $m$ concentrated to a small number of points of $\Omega_G$. Assume that an optimal $m^\star$ have been obtained from solving \eqref{eq:stoch_estimator}. Then, by inspecting the restricted measures $\projop^{(g)}(m)$ we can identify the pitches present in the signal, and their corresponding fundamental frequencies $\omega_0^{(g)}$. Furthermore, we can then restrict the grid $\Omega_G$ to contain only these active pitches and solve \eqref{eq:non_sparse_stoch_ot} as to obtain a debiased solution, countering the shrinkage towards zero caused by the $\norm{\cdot}_{\infty,1}$-term. When designing an algorithm for solving \eqref{eq:stoch_estimator}, we will work with the equivalent formulation
\begin{equation} \label{eq:stoch_estimator_hidden}
\begin{aligned}
    \minwrt[\stochmeas \in \pmeas(\TT)]\quad \frac{1}{2} \norm{ \br - \cA(\stochmeas)}_2^2 + \zeta \stochreg(\stochmeas),
\end{aligned}
\end{equation}
where $\stochreg(\stochmeas)$ is defined as
\begin{equation} \label{eq:stoch_regularizer}
\begin{aligned}
\stochreg(\stochmeas) = \mminwrt[m \in \pmeas(\TT \times \Omega_G)] & \quad \langle m,c\rangle + \eta \norm{m}_{\infty,1}
\\ \text{s.t. } &\quad \projop_1(m) = \stochmeas.
\end{aligned}
\end{equation}
When solving \eqref{eq:stoch_estimator_hidden} in practice, we will discretize the operator $\cA$. This highlights a key difference between the method proposed herein and other multi-pitch estimators: the measurement operator is independent of the set of pitch candidates, corresponding to the set $\Omega_G$. For methods such as \cite{adalbjornsson2013PEBS, ADALBJORNSSON2015236, ELVANDER201656}, the operator $\cA$ is constructed based on the set of candidate pitches, causing coherence problems as mentioned earlier.
\begin{figure}[t]
    \centering
    \includegraphics[width = \linewidth]{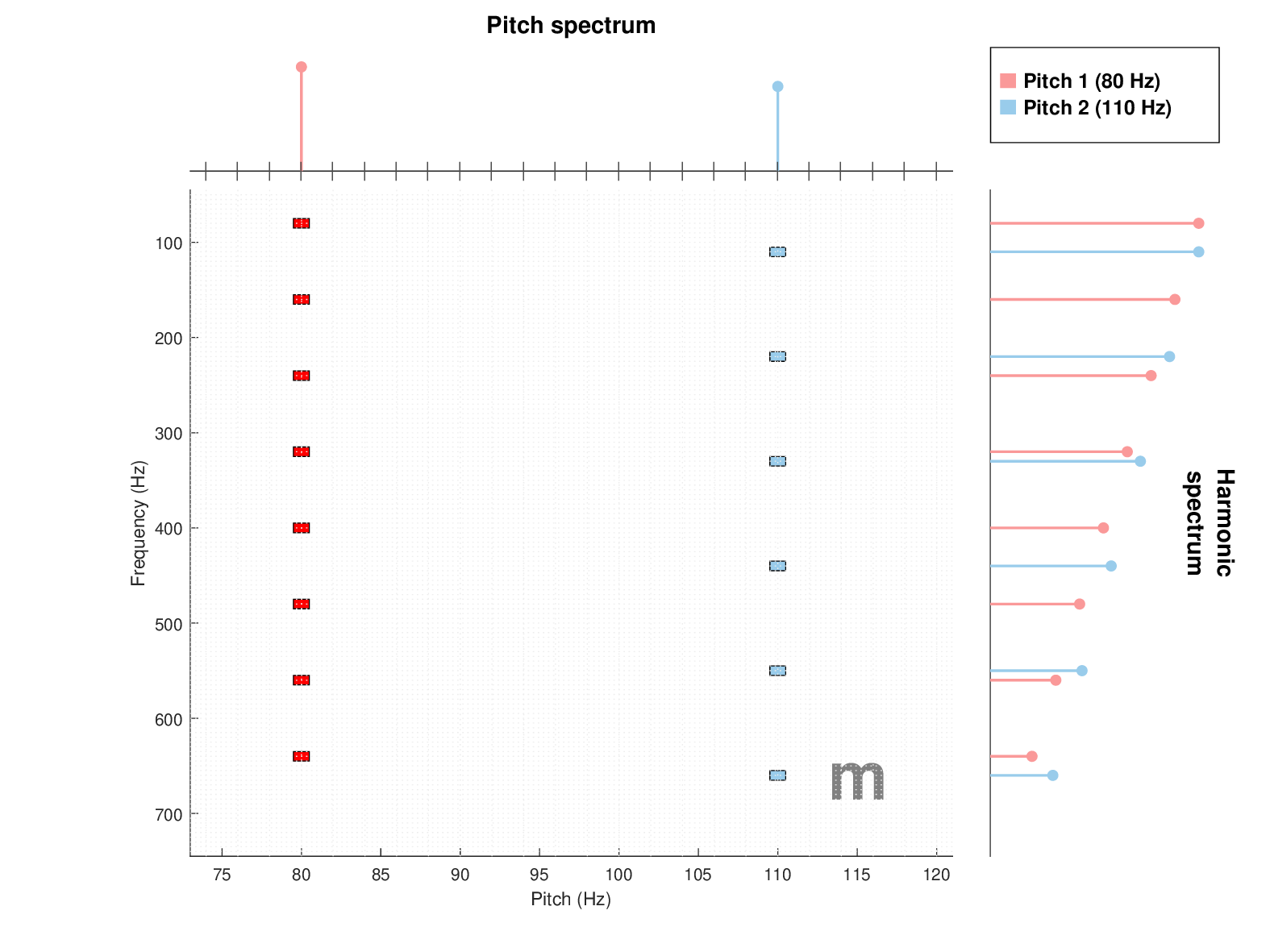}
    \caption{ Illustrative example of a transport plan between a harmonic spectrum to a pitch spectrum. The upper plot is the pitch spectrum, while the rotated plot on the right-hand side displays the spectrum of the evaluated signal.}  
    \label{fig:transplan}\vspace{-1mm}
\end{figure}
\subsection{Modification for deterministic model}
Recall that for the deterministic model, the object of interest is $\detmeas \in \cmeas(\TT)$. In order to arrive at an estimator analog to the stochastic one, we have to extend the formulation \eqref{eq:stoch_estimator_hidden} rather than \eqref{eq:stoch_estimator}. The reason for this is that the projection of the transport plan $m$ in \eqref{eq:stoch_estimator} necessarily is an element of $\pmeas(\TT)$. However, the extending formulation \eqref{eq:stoch_estimator_hidden} allows us to use the variation construction introduced earlier. Recall that if $\detmeas \in \cmeas(\TT)$ can be decomposed into as set of components $\detmeas^{(k)}$, then their corresponding variations $\abs{\detmeas} \in \pmeas(\TT)$ and $|\detmeas^{(k)}| \in \pmeas(\TT)$ satisfy the inequality \eqref{eq:tv_inequality}, where equality holds if the supports of the components $\detmeas^{(k)}$ do not overlap, or have the same phase on any overlaping subset of $\TT$. Now, consider a transport plan $m \in \pmeas(\TT \times \Omega_G)$. In analog to the interpretation of the restricted measure $\projop^{(g)}(m)$ as representing a component measure $\stochmeas^{(g)}$ in the stochastic case, we can in the deterministic case view $\projop^{(g)}(m)$ as representing the \emph{variation} $|\detmeas^{(k)}|$ of a component $\detmeas^{(k)} \in \cmeas(\TT)$. In view of this and the inequality \eqref{eq:tv_inequality}, it becomes natural to replace the equality constraint in \eqref{eq:stoch_regularizer} with the inequality
\begin{align} \label{eq:projection_inequality}
    \projop_1(m) \geq \abs{\detmeas},
\end{align}
which should be understood as a point-by-point inequality on $\TT$. We can then define the regularizer $\detreg: \cmeas(\TT) \to \RR$ as
\begin{equation} \label{eq:det_regularizer}
\begin{aligned}
\detreg(\detmeas) = \mminwrt[m \in \pmeas(\TT \times \Omega_G)] & \quad \langle m,c\rangle + \eta \norm{m}_{\infty,1}
\\ \text{s.t. } &\quad \projop_1(m) \geq \abs{\detmeas}.
\end{aligned}
\end{equation}
It may be verified that $\detreg$ is a convex function on $\cmeas(\TT)$. Note also that convexity is lost if the inequality is replaced by equality in the constraint. With this, we can formulate the multi-pitch estimator for the deterministic model as the solution to
\begin{equation} \label{eq:det_estimator_hidden}
\begin{aligned}
    \minwrt[\detmeas \in \cmeas(\TT)]\quad \frac{1}{2} \norm{ \by - \cA(\detmeas)}_2^2 + \zeta \detreg(\detmeas).
\end{aligned}
\end{equation}
As in the stochastic case, this is a convex optimization problem. Furthermore, we may retrieve the frequency content of the component pitches by inspecting the restricted measures $\projop^{(g)}(m)$. It should however be noted that we in general can only hope to retrieve $|\detmeas^{(k)}|$ rather than $\detmeas^{(k)}$ as we lose phase-information at points in $\TT$ where the inequality in \eqref{eq:projection_inequality} is strict. However, at points where equality holds, which for example is the case when the support of the components do not overlap, the phase can also be retrieved.
\section{Discretization and entropy regularization}
Expanding upon the problems in \eqref{eq:stoch_estimator_hidden} and \eqref{eq:det_estimator_hidden}, we here aim to discretize both problems, allowing for efficient implementations. To this end, we introduce the grid $\bomega =\{ \omega_1,\ldots, \omega_F\}\subset [0,\pi)$ consisting of $F$ frequencies, corresponding to a discretization of the frequency space, $\TT$. Then, the problem in \eqref{eq:stoch_estimator_hidden} can be written in vector form as 
\begin{equation}
\begin{aligned}
    \minwrt[\bnu \in \RR^F_+] & \quad \frac{1}{\Tlag} \norm{\bhr - \bA\bnu}_2^2 + \beta (\bnu^T \onevec_F)+ \zeta \bS_c(\bnu),
    \label{eq:main_problem_spectrum}
\end{aligned}
\end{equation}
where $\bA = \begin{bmatrix} \ba(\omega_1) & \dots & \ba(\omega_F)\end{bmatrix}$ is a dictionary matrix which is a matrix whose columns $\ba(\omega_f) \in \mathbb{C}^N$ consists of Fourier vectors corresponding to frequencies $\omega_f$, $f \in \indset{F}$, over the sample time. The matrix multiplication $\bA\bnu$ here corresponds to the operation $\cA(\stochmeas)$ in \eqref{eq:stoch_estimator_hidden}, where $\bnu \in \RR_+^{F}$ is a vectorized version of $\stochmeas$.  

Additionally, $\bhr \in \CC^{\Tlag}$ denotes the (partial) sample-auto-covariance vector, formed based on noisy measurements.\footnote{In practice when solving \eqref{eq:main_problem_spectrum}, we consider the equivalent purely-real problem where we stack the real- and imaginary parts according to $\hat{\br}_R = \begin{bmatrix}
    \mathfrak{Real}(\hat{\br}) & \mathfrak{Imag}(\hat{\br})
\end{bmatrix}^T$ and correspondingly for $\bA$.} To combat the addition of noise, we propose an $\ell_1$-norm regularizer on $\bnu$. The $\ell_1$-term promotes sparsity in the frequency estimates, pushing the components driven only by noise down to $0$. Since $\bnu$ contains only non-zero elements, the resulting $\ell_1$-norm term becomes $\beta (\bnu^T \onevec_F)$. We also introduce a normalizing factor of $2/\Tlag$ for the first term. Lastly, the last regularizing term is defined as
\begin{align*}
    \bS_c(\bnu) = \min_{\bM \in \RR_+^{F\times G}  } &\langle \bC , \bM \rangle + \epsilon D(\bM)+ \eta \norm{\bM}_{\infty,1},
    \\ \text{s.t. }& \bM\onevec_G = \bnu,
\end{align*}
corresponding to a discretization of $\stochreg(\stochmeas)$, where $\bM$ is the discretized transport plan, $\bC \in \RR_+^{F\times G}$ is the cost \textit{matrix}, with $\bC_{f,g} = c(\omega_f,\omega_0^{(g)})$, and where $\norm{\cdot}_{\infty,1}$ now denotes the regular mixed norm, i.e., $\norm{\bM}_{\infty,1} = \sum_{g=1}^G \max_{f \in \indset{F}} \abs{[\bM]_{f,g}}$. Furthermore, $\epsilon D(\bM) = \epsilon\sum_{f,g} [\bM]_{f,g}\log[\bM]_{f,g} - [\bM]_{f,g} + 1$, with $\epsilon > 0$, is an entropic regularization term \cite{cuturi2013sinkhorn} which will allow us to derive efficient numerical solvers. It may be noted that the addition of $\epsilon D(\bM)$ renders \eqref{eq:main_problem_spectrum} strictly convex.

Similarly, we discretize the problem in \eqref{eq:det_estimator_hidden}, yielding
\begin{align}
    \label{eq:main_problem}
    \minwrt[\bmu \in \CC^F] \quad \frac{1}{N} \norm{\by - \bA \bmu}_2^2 + \beta \norm{\bmu}_1 + \zeta \otdist(\bmu),
\end{align}
where
\begin{equation}
\label{eq:otdist}
    \begin{aligned}
        \otdist(\bmu) \triangleq \mminwrt[\bM \in \RR_+^{F\times G}]&\;\; \langle \bC, \bM \rangle + \epsilon D(\bM) +  \eta \norm{\bM}_{\infty,1},
        \\\text{s.t. }&\quad \bM \onevec_G \geq \abs{\bmu}.
    \end{aligned}
\end{equation}
The matrix multiplication $\bA\bmu$ here corresponds to the operation $\cA(\detmeas)$ in \eqref{eq:det_estimator_hidden}, where $\bmu \in \RR_+^{F}$ is a vectorized version of $\detmeas$. As we again work with noisy measurements, the $\ell_1$-norm term is used here as well. However, as the elements of $\detmeas$ are complex valued, the $\ell_1$-norm term is not simplified, as was the case for the stochastic problem. In the following section, we introduce efficient solvers for the estimator presented herein.
\section{Numerical algorithms}
In order to compute the solutions to \eqref{eq:main_problem_spectrum} and \eqref{eq:main_problem}, we use iterative algorithms. In particular, we in this work propose to use proximal gradient-type methods. As a brief general outline, consider the minimization problem
\begin{align*}
    \minwrt[x \in \CC^F] \quad E(x) = e(x) + d(x),
\end{align*}
where $e$ is a smooth convex function, and $d$ is a (possibly non-differentiable) convex function. Under the assumption that $e$ has a Lipschitz continuous gradient $L$, the problem can be solved through alternating between taking gradient steps with respect to $e$ and computing the proximal mapping of the regularizer $d$, given by
\begin{align}
\label{eq:prox_grad_scheme}
     x^{(j+1)} &= \mathrm{prox}_{\gamma d}(x^{(j)} - \gamma\nabla e(x^{(j)}))\\
     &\triangleq \argminwrt[x] \quad \frac{1}{2} \| x - u \|_2^2 + \gamma d(x). \nonumber
\end{align}
with $1/\gamma \geq L$ being the step size, $u =  (x^{(j)} - \gamma\nabla e(x^{(j)}) )$, and $\mathrm{prox_d}$ is the proximal operator of $d$. A generalization of proximal gradient descent has been proposed (see, e.g., \cite{lu2018relatively, bauschke2017descent, hanzely2021accelerated}), where an approximation of the function $e$ depends on a Bregman distance associated with a function $\varphi$. The iterations for this are then given by
\begin{align}
\label{eq:bregprox}
    x^{(j+1)} = \argminwrt[x\in\CC^F] \left\{ \gamma\langle \nabla e(x^{(j)}),x \rangle + \gamma d(x) + \mathcal{D}_\varphi(x,x^{(j)}) \right\},
\end{align}
where
\begin{align*}
    \mathcal{D}_\varphi(x,x^{(j)}) = \varphi(x) - \varphi(x^{(j)}) - \nabla \varphi (x^{(j)})(x-x^{(j)}).
\end{align*}
It should be noted that with $\varphi(x) = \norm{x}_2^2$, the regular proximal gradient scheme is obtained. Furthermore, both descent methods have so-called \textit{accelerated} versions, which speed up the convergence; see \cite{beck2009fast, hanzely2021accelerated} for more information. Below, we first present the solver for the stochastic model, based on the Bregman proximal gradient schemes presented in this section.

\subsection{Stochastic model}
 We herein propose to solve \eqref{eq:main_problem_spectrum} using the Bregman proximal gradient scheme in \eqref{eq:bregprox}, with $e(\bnu) = \frac{1}{\Tlag} \norm{\bhr - \bA \bnu}_2^2 + \beta (\bnu^T \onevec_F)$, and $d(\bnu) = \zeta\otdistsp(\bnu)$. The function associated with the Bregman divergence we here use is $\varphi(\bnu) = \sum_{f=1}^F \bnu_f \log \bnu_f$, yielding the Kullback-Leibler relative entropy, given by
    \begin{align*}
        \mathcal{D}_{KL}(\bnu, \bnu^{(j)}) = \sum_{f=1}^F [\bnu]_f \log( [\bnu]_f/[\bnu^{(j)}]_f) + [\bnu^{(j)}]_f - [\bnu]_f.
    \end{align*}
    We here choose the step size to be $\stepsize = \frac{1}{L}$, where $L~=~2\norm{\bA}^2/\Tlag$, and where $\norm{\cdot}$ is the operator norm, as is also the case for the other method presented later. For this method, this does not guarantee convergence, but was shown by empirical testing to be a sufficiently small step size. Inspired by the work in \cite{KarlssonRingh17_gen}, the iterates for the solution are, nonetheless, given by
\begin{align} \label{eq:bregman_iteraiton}
    \bnu^{(j+1)} = \argminwrt[\bnu \in \RR_+] \quad &\stepsize\left\langle\bu ,\bnu \right\rangle + \mathcal{D}_{KL}(\bnu, \bnu^{(j)})+ \stepsize\zeta \otdistsp(\bnu),
\end{align}
where $\bu = \nabla\left(\frac{1}{\Tlag}\| \bhr - \bA\bnu^{(j)} \|_2^2 + \stepsize\beta ((\bnu^{(j)})^T \onevec_F\right)$. For this minimization problem, the following proposition holds.
\begin{proposition}
\label{prop:prox_op_bregman}
Let $\odot$ denote elementwise multiplication. Then, the update $\bnu^{(j+1)}$ in \eqref{eq:bregman_iteraiton} and corresponding transport plan $\bM$ are given by
    \begin{align} 
        \bnu &= \bnu^{(j)}\odot \exp\left(-\stepsize\bu - \stepsize\zeta\blambda\right),\label{eq:optimal_nu}
        \\\bM &= \mathrm{diag}(\bv) ( \bK \odot \bW),\label{eq:optimal_stoch_M}
    \end{align}
with $\bv~\!=\!~\exp\left( \frac{1}{ \epsilon} \blambda \right)$, $\bK = \exp\left( -\frac{1}{\epsilon} \bC \right)$, $\bW = \exp\left( \frac{1}{\epsilon} \bPsi \right)$, and where $\blambda \in \RR^F_+$ and $\bPsi \in \RR^{F\times G}$ solve
    \begin{equation}\label{eq:prox_dual_problem_cov}
    \begin{aligned}
    \minwrt[\substack{\blambda,\; \bPsi \\ \|\bPsi\|_{1,\infty} \leq \eta}] \;\stepsize\zeta \epsilon \langle\bK\odot\bW, \bv\onevec_G^T \rangle
        +\langle \bnu^{(j)}, \exp\left(-\stepsize\bu - \stepsize\zeta\blambda\right)\rangle
    \end{aligned}
 \end{equation}
    Here, $\|\bPsi\|_{1,\infty} = \max_g \sum_f \abs{\Psi_{f,g}}$ is the dual norm of $\|\cdot\|_{\infty,1}$, and all exponentiation and powers are elementwise.
    \begin{proof}
        See appendix.
    \end{proof}
\end{proposition}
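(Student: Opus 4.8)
The plan is to recognize that the Bregman step \eqref{eq:bregman_iteraiton}, after substituting the definition of $\otdistsp$ and pulling its inner minimization outside, is the single jointly convex program
\begin{align*}
\min_{\bnu\in\RR_+^F,\ \bM\in\RR_+^{F\times G}}\ & \stepsize\langle\bu,\bnu\rangle+\mathcal{D}_{KL}(\bnu,\bnu^{(j)})+\stepsize\zeta\big(\langle\bC,\bM\rangle+\epsilon D(\bM)+\eta\|\bM\|_{\infty,1}\big) \\ \text{s.t.}\quad & \bM\onevec_G=\bnu,
\end{align*}
which I would solve by Lagrangian duality. Since $\mathcal{D}_{KL}(\cdot,\bnu^{(j)})$ and $D(\cdot)$ are strictly convex on the nonnegative orthant with gradients diverging at its boundary, the minimizer is unique and strictly positive, and Slater's condition holds, so strong duality applies. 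Attach a multiplier $\blambda$ (scaled by $\stepsize\zeta$, as needed to match the statement) to the coupling constraint $\bM\onevec_G=\bnu$; the Lagrangian then separates additively into a $\bnu$-block and an $\bM$-block.

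The $\bnu$-block is separable and smooth; using $\nabla_{\bnu}\mathcal{D}_{KL}(\bnu,\bnu^{(j)})=\log(\bnu/\bnu^{(j)})$ elementwise, stationarity reads $\log(\bnu/\bnu^{(j)})=-\stepsize\bu-\stepsize\zeta\blambda$, which is \eqref{eq:optimal_nu}, and substituting back shows this block contributes $\sum_f[\bnu^{(j)}]_f-\langle\bnu^{(j)},\exp(-\stepsize\bu-\stepsize\zeta\blambda)\rangle$ to the dual function. For the $\bM$-block I would first rewrite the mixed norm as a support function, $\eta\|\bM\|_{\infty,1}=\max_{\|\bPsi\|_{1,\infty}\le\eta}\langle-\bPsi,\bM\rangle$, which requires checking that $\|\bPsi\|_{1,\infty}=\max_g\sum_f|\Psi_{f,g}|$ is the norm dual to $\|\cdot\|_{\infty,1}$ (a column-wise H\"older estimate, the symmetry of the dual-norm ball allowing the sign $-\bPsi$), and then apply Sion's minimax theorem to exchange $\min_{\bM}$ with $\max_{\bPsi}$ (legitimate since the $\bPsi$-feasible set is convex and compact and the objective is convex in $\bM$, concave in $\bPsi$). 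For fixed $(\blambda,\bPsi)$ the inner $\min_{\bM\ge0}$ is again separable and smooth: $\stepsize\zeta\epsilon\log[\bM]_{f,g}+\stepsize\zeta\big([\bC]_{f,g}-\lambda_f-\Psi_{f,g}\big)=0$, i.e.\ $[\bM]_{f,g}=\exp\big((\lambda_f-[\bC]_{f,g}+\Psi_{f,g})/\epsilon\big)=v_fK_{f,g}W_{f,g}$ with $\bv,\bK,\bW$ as stated, which is \eqref{eq:optimal_stoch_M}; plugging this back and using $\log[\bM]_{f,g}=-([\bC]_{f,g}-\lambda_f-\Psi_{f,g})/\epsilon$ in $D(\bM)$, the entropic and linear terms collapse so that this block contributes $\stepsize\zeta\epsilon FG-\stepsize\zeta\epsilon\langle\bK\odot\bW,\bv\onevec_G^T\rangle$.

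Adding the two blocks, the dual problem is
\begin{align*}
\max_{\substack{\blambda,\ \bPsi\\ \|\bPsi\|_{1,\infty}\le\eta}}\ & \textstyle\sum_f[\bnu^{(j)}]_f+\stepsize\zeta\epsilon FG \\ & -\langle\bnu^{(j)},\exp(-\stepsize\bu-\stepsize\zeta\blambda)\rangle-\stepsize\zeta\epsilon\langle\bK\odot\bW,\bv\onevec_G^T\rangle ,
\end{align*}
and discarding the two additive constants and negating turns this into the minimization \eqref{eq:prox_dual_problem_cov}. Finally, since \eqref{eq:optimal_nu}--\eqref{eq:optimal_stoch_M} are precisely the stationarity conditions linking a solution $(\blambda,\bPsi)$ of \eqref{eq:prox_dual_problem_cov} to the primal variables, strong duality certifies that the $\bnu$ so obtained is the unique minimizer in \eqref{eq:bregman_iteraiton}, with $\bM$ the associated transport plan. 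I expect the only genuinely delicate step to be the norm-dualization and minimax exchange --- verifying the dual-norm identity for $\|\cdot\|_{\infty,1}$, confirming compactness so Sion's theorem applies, and keeping careful track of the $\stepsize\zeta$ scaling and the sign convention on $\bPsi$ so that the closed forms emerge exactly as stated; the two entropy-regularized inner minimizations and the back-substitution are then routine.
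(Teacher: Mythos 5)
Your proof is correct and follows essentially the same route as the paper's: dualize the coupling constraint $\bM\onevec_G=\bnu$ in a Lagrangian, solve the zero-gradient equations for $\bnu$ and $\bM$ to obtain the stated closed forms, and back-substitute to arrive at the dual problem \eqref{eq:prox_dual_problem_cov}. The only cosmetic difference is that the paper introduces an auxiliary variable $\bQ=\bM$ and gets the constraint $\|\bPsi\|_{1,\infty}\leq\eta$ from finiteness of the infimum over $\bQ$, whereas you represent $\eta\|\bM\|_{\infty,1}$ as the support function of the dual-norm ball and exchange min and max via Sion's theorem --- the same dual-norm fact either way.
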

For solving the problem in \eqref{eq:prox_dual_problem_cov}, we use a block-coordinate descent scheme (see \cite{KarlssonRingh17_gen,elvander2020multi}). The following proposition holds.
\begin{proposition}\label{prop:stochastic_updates}
Consider an iterative scheme with update in step $k+1$ given as
\begin{align*}
    \blambda^{k+1} &=\frac{\epsilon}{1 + \stepsize\zeta\epsilon}\left( \log(\bnu_0) - \log(\bxi^k) \right),
    \\
    \bPsi^{k+1} &= \argminwrt[\bPsi: \| \bPsi \|_{1,\infty} \leq \eta]  \left\langle (\bv^{k+1} \onevec_G^T) \odot \bK, \exp\left(\frac{1}{\epsilon} \bPsi\right) \right\rangle,
\end{align*}
where $\bnu_0 = \bnu^{(j)} \odot \exp(-\stepsize \bu)$, $\bxi^k = \exp(-\frac{1}{\epsilon} \bC + \frac{1}{\epsilon} \bPsi^k) \onevec_G$, and $v^{k+1} = \exp(\frac{1}{\epsilon} \blambda^{k+1})$
Then, the iterates converge linearly to a solution of \eqref{eq:prox_dual_problem_cov}.
\end{proposition}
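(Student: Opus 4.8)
The plan is to identify the proposed scheme as \emph{exact} two-block (Gauss--Seidel) coordinate descent on the dual objective in \eqref{eq:prox_dual_problem_cov}, and then to obtain the linear rate from strong convexity of that objective on the compact region visited by the iterates. Write the dual objective as
\begin{align*}
    \Phi(\blambda,\bPsi) = \stepsize\zeta\epsilon\,\langle \bK\odot\bW,\, \bv\onevec_G^T\rangle + \langle \bnu_0,\, \exp(-\stepsize\zeta\blambda)\rangle ,
\end{align*}
with $\bnu_0 = \bnu^{(j)}\odot\exp(-\stepsize\bu)$ held fixed throughout the inner loop, $\bv = \exp(\blambda/\epsilon)$, $\bW = \exp(\bPsi/\epsilon)$, and $\bK = \exp(-\bC/\epsilon)$. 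Using $\langle \bK\odot\bW, \bv\onevec_G^T\rangle = \sum_f [\bxi(\bPsi)]_f\, e^{\lambda_f/\epsilon}$ with $\bxi(\bPsi) = \exp(-\bC/\epsilon + \bPsi/\epsilon)\onevec_G$, the objective is separable over $f$ in the $\blambda$-block; the first-order condition $[\bxi]_f e^{\lambda_f/\epsilon} = [\bnu_0]_f e^{-\stepsize\zeta\lambda_f}$ rearranges to $\lambda_f = \tfrac{\epsilon}{1+\stepsize\zeta\epsilon}\big(\log[\bnu_0]_f - \log[\bxi]_f\big)$, which is exactly the stated $\blambda^{k+1}$ update. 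Since the term $\langle\bnu_0,\exp(-\stepsize\zeta\blambda)\rangle$ is independent of $\bPsi$, minimizing $\Phi$ over $\{\bPsi : \|\bPsi\|_{1,\infty}\le\eta\}$ with $\blambda$ fixed reduces to minimizing $\langle(\bv\onevec_G^T)\odot\bK,\exp(\bPsi/\epsilon)\rangle$ over the same set, which is the stated $\bPsi^{k+1}$ update; note that this constraint is one $\ell_1$-ball per column, so the subproblem is itself separable over $g$. Hence the scheme is exact alternating minimization applied to $\Phi$.

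Next I would establish that both iterate sequences are automatically confined to a compact set. The $\bPsi^k$ lie in the compact set $\{\|\cdot\|_{1,\infty}\le\eta\}$ by construction. For the $\blambda^k$: provided $\bnu^{(j)}>0$ entrywise (which propagates from $\bnu^{(0)}>0$ through \eqref{eq:optimal_nu}), we have $0 < [\bnu_0]_f < \infty$, and $[\bxi(\bPsi)]_f$ is bounded above and below away from zero uniformly over feasible $\bPsi$ (since $0\le C_{f,g}$ and $|\Psi_{f,g}|\le\eta$); therefore the update formula places each $\blambda^{k}$ in a fixed bounded box. On the resulting compact region $\mathcal{K}$, $\Phi$ is smooth, and one checks that its Hessian is positive definite: it is block-diagonal over $f$, and within each block the rank-one contributions of the exponential-sum term annihilate exactly the direction that perturbs $\lambda_f$ against all of $\Psi_{f,\cdot}$, a direction along which the strictly convex term $[\bnu_0]_f e^{-\stepsize\zeta\lambda_f}$ has strictly positive curvature. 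Consequently $\Phi$ is $m$-strongly convex and $M$-smooth on $\mathcal{K}$ for some $0<m\le M<\infty$.

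Finally, for a smooth function that is strongly convex on the compact region containing all iterates, with block structure given by a Cartesian product of a linear space and a convex set, exact two-block coordinate minimization converges linearly: one obtains $\Phi(\blambda^{k+1},\bPsi^{k+1}) - \Phi^\star \le \rho\big(\Phi(\blambda^k,\bPsi^k) - \Phi^\star\big)$ for some $\rho = \rho(m,M)\in(0,1)$ — this is the line of argument used for the generalized Sinkhorn iterations in \cite{KarlssonRingh17_gen}, see also the standard block-coordinate-descent literature — and strong convexity then upgrades this to linear convergence of $(\blambda^k,\bPsi^k)$ to the unique minimizer of \eqref{eq:prox_dual_problem_cov}.

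I expect the main obstacle to be establishing genuine strong convexity — i.e.\ uniform positive-definiteness of the Hessian of $\Phi$ on $\mathcal{K}$ — as opposed to mere strict convexity, since it is strong convexity that drives the linear (rather than merely asymptotic) rate; this hinges on $\epsilon>0$ (the entropic term) and on maintaining $\bnu^{(j)}>0$. A secondary, more bookkeeping-type point is confirming that the iterates cannot escape to infinity, which I would handle via the explicit bound on the $\blambda$-update noted above. An alternative route, closer in spirit to \cite{KarlssonRingh17_gen}, would be to express the composed update as a self-map of the positive scalings and show that it contracts in Hilbert's projective metric via the Birkhoff--Hopf theorem using strict positivity of $\bK$; the wrinkle there is that the $\bPsi$-update is not a pure diagonal rescaling because of the $\|\cdot\|_{1,\infty}$ constraint, so that contraction argument would have to be adapted to the constrained column subproblems.
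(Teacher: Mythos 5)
Your proposal is correct, and its first half coincides with the paper's own argument: the paper likewise verifies that the $\blambda$-update is the exact minimizer via the zero-gradient equations for fixed $\bPsi$, and that the $\bPsi$-step is the exact minimizer of the column-separable, $\ell_1$-ball-constrained subproblem (handled in the paper by citing \cite[Thm.~2]{Haasler2024}, which yields the sorting-based solution you allude to). Where you genuinely diverge is in how the linear rate is obtained. The paper simply observes that the dual objective has the composite form $g(\mathbf{E}x)+\langle b,x\rangle$ with $g$ strictly convex and the feasible set a product of convex blocks, and invokes \cite[Thm.~2.1]{luo1992convergence}; that error-bound-based result gives linear convergence of cyclic exact block minimization \emph{without} requiring strong convexity of the objective, so no compactness or positivity bookkeeping is needed. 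You instead build a self-contained argument: confinement of the iterates to a compact convex set (using the explicit $\blambda$-update formula and the bound $\abs{\Psi_{f,g}}\leq\eta$), uniform positive definiteness of the block-diagonal Hessian there (your observation that the one-dimensional null direction of the exponential-sum term per block, $\lambda_f$ against all of $\Psi_{f,\cdot}$, picks up curvature from the term $[\bnu_0]_f e^{-\stepsize\zeta\lambda_f}$ is correct, provided $\bnu^{(j)}>0$, which you rightly note propagates through the multiplicative update \eqref{eq:optimal_nu}), and then standard linear convergence of exact two-block alternating minimization for smooth, strongly convex objectives. This route is valid and more transparent about the mechanism, but it pays for it with the extra hypotheses and boundedness arguments, and it leans on a "standard BCD literature" rate statement that you would still need to pin down for the constrained-block case (e.g.\ via the fact that exact block minimization decreases the objective at least as much as a projected block-gradient step); the paper's appeal to Luo--Tseng sidesteps all of this, including the question of whether the dual objective is strongly convex at all, at the cost of being a black-box citation. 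Your closing remark that a Birkhoff--Hopf contraction argument would need adaptation because the $\bPsi$-step is not a diagonal scaling is apt, and it is precisely why neither you nor the paper pursues that route.
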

\begin{proof}
    The objective function of \eqref{eq:prox_dual_problem_cov} is differentiable with respect to the dual variable $\blambda$. For a fixed $\bPsi$, the stated update for $\blambda$ satisfies the zero-gradient equations and is this optimal. Then, for fixed $\blambda$, we may minimize the objective with respect to $\bPsi$ using \cite[Thm. 2]{Haasler2024}, which only requires a sorting operation.
As the problem satisfies the assumptions of \cite[Thm. 2.1]{luo1992convergence}, linear convergence follows directly.
\end{proof}
It may be noted that we may retrieve the distribution over $\Omega$, i.e., the set of candidate fundamental frequencies, as
\begin{align*}
    \bM^T \onevec_F = (\bK^T \odot \bW^T) \bv.
\end{align*}
From inspecting the support of this vector, one can identify which pitches that are present in the signal. Furthermore, we may characterize the solution as the entropy regularization parameter $\epsilon \to 0$.
\begin{proposition} \label{prop:maximum_entropy}
Let $\{\epsilon_p\}_p$ be a sequence with $\lim_{p\to \infty} \epsilon_p \to 0$. Then, the corresponding sequence $\{\bnu_p\}_p$ of solutions to \eqref{eq:main_problem_spectrum} has a limit point $\bnu^\star$ satisfying $\bnu^\star = \bM^\star\onevec_G$, where $\bM^\star$ is the unique maximum-entropy element among all transport plans solving \eqref{eq:main_problem_spectrum} without entropy regularization.
\end{proposition}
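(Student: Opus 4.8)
The plan is to observe that \eqref{eq:main_problem_spectrum} is, after eliminating $\bnu$, a single strictly convex program in the transport plan $\bM$, and then to invoke a standard selection-principle ($\Gamma$-limit) argument as $\epsilon\to 0$. Substituting the definition of $\bS_c$ and using that every $\bnu\in\RR_+^F$ equals $\bM\onevec_G$ for some $\bM\in\RR_+^{F\times G}$, problem \eqref{eq:main_problem_spectrum} is equivalent to $\min_{\bM\in\RR_+^{F\times G}}\tilde J_{\epsilon}(\bM)$ with $\tilde J_\epsilon(\bM)=J(\bM)+\zeta\epsilon D(\bM)$, where $J$ denotes the same objective with the entropy term removed and $\bnu=\bM\onevec_G$. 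Two elementary facts are used throughout: (i) $t\mapsto t\log t-t+1$ is nonnegative and strictly convex on $[0,\infty)$ (with $0\log 0:=0$), so $D\ge 0$ and $D$ is strictly convex and continuous on $\RR_+^{F\times G}$; and (ii) $\tilde J_\epsilon(\bM)\ge \zeta\eta\norm{\bM}_{\infty,1}\ge(\zeta\eta/F)\norm{\bM}_1$ for every $\epsilon\ge 0$, so $J$ and each $\tilde J_\epsilon$ are coercive on the closed set $\RR_+^{F\times G}$.

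First I would record well-posedness. By coercivity, continuity, and strict convexity, $\tilde J_{\epsilon_p}$ has a unique minimizer $\bM_p$, and $\bnu_p=\bM_p\onevec_G$ is the unique solution of \eqref{eq:main_problem_spectrum}; likewise $\mathcal M^\star:=\argmin_{\bM\in\RR_+^{F\times G}}J(\bM)$ is nonempty, closed and convex, and since $D$ is strictly convex it has a unique $D$-minimizer $\bM^\star$, which is by definition the maximum-entropy transport plan among those solving the unregularized problem. Next I would show $\{\bM_p\}$ is bounded: fixing any $\bM_0\in\mathcal M^\star$, optimality of $\bM_p$ together with $D\ge 0$ gives $J(\bM_p)\le\tilde J_{\epsilon_p}(\bM_p)\le\tilde J_{\epsilon_p}(\bM_0)=J(\bM_0)+\zeta\epsilon_p D(\bM_0)$, which stays bounded since $\epsilon_p\to 0$; coercivity of $J$ then confines $\{\bM_p\}$ to a bounded set, so it has limit points.

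The core step is to show every limit point equals $\bM^\star$. Let $\bM_{p_j}\to\tilde\bM$. For any $\bM\in\RR_+^{F\times G}$, optimality of $\bM_{p_j}$ and $D\ge 0$ give $J(\bM_{p_j})\le J(\bM)+\zeta\epsilon_{p_j}D(\bM)$; letting $j\to\infty$ and using continuity of $J$ with $\epsilon_{p_j}\to 0$ yields $J(\tilde\bM)\le J(\bM)$, hence $\tilde\bM\in\mathcal M^\star$. Moreover, for any $\bM\in\mathcal M^\star$ we have $J(\bM_{p_j})\ge\min J=J(\bM)$, so the optimality inequality $J(\bM_{p_j})+\zeta\epsilon_{p_j}D(\bM_{p_j})\le J(\bM)+\zeta\epsilon_{p_j}D(\bM)$ reduces, after cancelling $J(\bM)$ on both sides and dividing by $\zeta\epsilon_{p_j}>0$, to $D(\bM_{p_j})\le D(\bM)$; passing to the limit with $D$ continuous gives $D(\tilde\bM)\le D(\bM)$. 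Thus $\tilde\bM$ minimizes $D$ over $\mathcal M^\star$, so $\tilde\bM=\bM^\star$ by strict convexity. Since every limit point of the bounded sequence $\{\bM_p\}$ equals $\bM^\star$, in fact $\bM_p\to\bM^\star$, whence $\bnu_p=\bM_p\onevec_G\to\bnu^\star:=\bM^\star\onevec_G$, which is the assertion.

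I do not expect a substantive obstacle here — it is a textbook $\Gamma$-convergence/selection argument — but the points demanding care are: (a) verifying that \eqref{eq:main_problem_spectrum} genuinely collapses to one coercive, strictly convex program in $\bM$, so that $\bM_p$, $\bnu_p$, $\mathcal M^\star$, and $\bM^\star$ are all well defined; (b) the inequality $D\ge 0$, which is what permits discarding the entropy term from one side of each optimality bound; and (c) continuity of $D$ up to the boundary of $\RR_+^{F\times G}$ (the $0\log 0:=0$ convention), needed when passing to the limit in $D(\bM_{p_j})$. Strict convexity of $D$ on the convex set $\mathcal M^\star$ then both pins down uniqueness of $\bM^\star$ and upgrades ``a limit point'' to convergence of the whole sequence.
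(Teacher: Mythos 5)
Your proof is correct, and its core is the same selection argument as the paper's: two optimality inequalities for the $\epsilon$-regularized minimizers, nonnegativity and continuity of $D$, a compactness/boundedness step to extract a limit point, and strict convexity of $D$ to identify it as the unique maximum-entropy (minimum-$D$) solution of the unregularized problem. The only substantive difference is how boundedness is obtained: the paper first recasts \eqref{eq:main_problem_spectrum} in a constrained form over the data-fit ball $U_\rho = \{\bM \geq 0 : \|\bhr - \bA(\bM\onevec_G)\|_2^2 \leq \rho\}$, whose compactness follows from the all-ones first row of $\bA$, and then appeals to the existence of a $\rho$ matching the penalty weight $\zeta$; you stay with the penalized formulation and get coercivity directly from the $\zeta\eta\|\cdot\|_{\infty,1}$ term (the $\beta\,\onevec_F^T\bM\onevec_G$ term would serve equally and would also cover the $\eta=0$ case used later for debiasing), which sidesteps the penalized-versus-constrained equivalence and the question of whether $\rho$ can be taken uniform in $\epsilon_p$. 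You also conclude convergence of the whole sequence rather than just existence of a limit point, which is slightly stronger than the statement requires; note in passing that your direction $D(\bM^\star)\leq D(\bM_o)$ is the right one for maximum entropy, whereas the paper's write-up has the inequality sign flipped at that point.
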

\begin{proof}
See appendix
\end{proof}
This result is a direct analog of the classic fact for the case of transport plans where the margins are observable. Furthermore, if we let $\eta = 0$, i.e., not enforce any group-sparsity among the pitches, then it follows directly that each row of $\bM^\star$ consists of only zeros and equal-magnitude elements. Specifically, an element $[\bM^\star]_{k,\ell}$ is non-zero only if $[\bC]_{k,\ell} = \min_{j} [\bC]_{k,j}$. For $k$ with $[\bnu^\star]_k > 0$, the mass is split equally among the elements $[\bM]_{k,\ell}$ with $\{\ell \mid [\bC]_{k,\ell} = \min_{j} [\bC]_{k,j} \}$. Thus, when $\eta = 0$, the power corresponding to a certain frequency is distributed uniformly over the set of pitches most consistent with it. Next, we present the estimator for the deterministic model, where we instead use the regular proximal gradient scheme.
\subsection{Deterministic model}
We propose solving \eqref{eq:main_problem} through a proximal gradient scheme, alternating between taking gradient steps for the data-fit terms, and solving the proximal operator for the remaining terms. Here, in order to guarantee convergence, the stepsize is chosen to be $\gamma = \frac{1}{L}$, where $L = 2 \norm{\bA}^2/N$. Then, the iterates are given by
\begin{align*}
    \bmu^{(j+1)} 
    \!\!=\! \mathrm{prox}_{\stepsize\beta \norm{\cdot}_1 \!+\! \stepsize\zeta\otdist_1(\cdot)}\!\! \left(\!\! \bmu^{(j)} \!-\! \stepsize\nabla_{\bmu^*} \!\frac{1}{N} \norm{\by \!-\! \bA \bmu^{(j)}\!}_2^2\right)\nonumber
\end{align*}
where $\nabla_{\balpha^*}(\cdot)$ denotes the (Wirtinger) gradient with respect to the complex conjugate $\bmu^*$. For the proximal operator, the following proposition holds.
\begin{proposition}
\label{prop:prox_op_abs}
    The proximal operator for $\stepsize\beta \norm{\cdot}_1 + \stepsize\zeta\otdist(\cdot)$  is unique and given by 
    \begin{align*}
        \mathrm{prox}_{\stepsize\beta \norm{\cdot}_1 + \stepsize\zeta\otdist(\cdot)}(\bu) = e^{i\angle \bu}\odot\left( \left(\abs{\bu} - \stepsize\beta\onevec_F - \blambda\right )_+\right)
    \end{align*}
    where $\odot$ denotes elementwise multiplication, $\angle \bu$ denotes the phase angle of $\bu$, $(\cdot)_+ = \max(\,\cdot\,,0)$, and $\blambda~\in~\RR_+^{F}$ solves
    \begin{align} 
    \label{eq:prox_dual_problem_abs}
    \minwrt[\substack{\blambda \in \RR_+^{F} \:,\:  \bPsi \in \RR^{F \times G } \\ \|\bPsi\|_{1,\infty} \leq \eta}] \stepsize\zeta \epsilon \langle\bK\odot\bW, \bv\onevec_G^T \rangle +\frac{1}{2}\!\left( \left(\abs{\bu} \!-\! \stepsize\beta\onevec_F \!-\! \blambda\right)_+\right)^2
\end{align}
where $\bv$, $\bK$, and $\bW$ are as in Proposition~\ref{prop:prox_op_bregman}.
\end{proposition}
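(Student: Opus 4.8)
The plan is to compute the proximal operator by writing out the defining minimization and passing to a dual problem, exactly mirroring the structure already used in the proof of Proposition~\ref{prop:prox_op_bregman} but now taking into account (i) the complex-valued variable $\bmu$, which forces us to handle the phase separately, (ii) the extra $\ell_1$ term $\beta\norm{\cdot}_1$, and (iii) the \emph{inequality} constraint $\bM\onevec_G \geq \abs{\bmu}$ in the definition \eqref{eq:otdist} of $\otdist$, which is the analog of \eqref{eq:projection_inequality}. First I would expand
\[
\mathrm{prox}_{\stepsize\beta\norm{\cdot}_1+\stepsize\zeta\otdist}(\bu)=\argmin_{\bmu\in\CC^F}\;\tfrac12\norm{\bmu-\bu}_2^2+\stepsize\beta\norm{\bmu}_1+\stepsize\zeta\,\otdist(\bmu),
\]
and substitute the inner minimization defining $\otdist(\bmu)$, so that the problem becomes a joint minimization over $\bmu\in\CC^F$ and $\bM\in\RR_+^{F\times G}$ with the coupling constraint $\bM\onevec_G\geq\abs{\bmu}$. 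The key first observation is that every term except $\tfrac12\norm{\bmu-\bu}_2^2$ depends on $\bmu$ only through $\abs{\bmu}$; since $\mathrm{Re}(\overline{u_f}\mu_f)\le|u_f||\mu_f|$ with equality iff $\angle\mu_f=\angle u_f$, the optimal $\bmu$ satisfies $\angle\bmu=\angle\bu$ elementwise, reducing the problem to a real, nonnegative one in $\br:=\abs{\bmu}\in\RR_+^F$ with data $\abs{\bu}$.

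Next I would introduce a multiplier $\blambda\in\RR_+^F$ for the inequality constraint $\bM\onevec_G\geq\br$ and form the Lagrangian. Because all functions involved are convex, closed, and the constraint is linear with a strictly feasible point (e.g.\ take $\bM$ large), strong duality holds and we may swap the order of optimization. Minimizing the Lagrangian over $\br\geq 0$ gives a separable one-dimensional problem $\min_{r_f\geq 0}\tfrac12(r_f-|u_f|)^2+\stepsize\beta r_f+\lambda_f r_f$, whose solution is the soft-thresholding/clipping $r_f=(|u_f|-\stepsize\beta-\lambda_f)_+$, yielding the stated closed form for the prox once the phase is reattached. Minimizing the Lagrangian over $\bM\in\RR_+^{F\times G}$ — which carries the cost term $\langle\bC,\bM\rangle$, the entropy $\epsilon D(\bM)$, the mixed norm $\eta\norm{\bM}_{\infty,1}$, and the term $-\langle\blambda,\bM\onevec_G\rangle$ coming from the constraint — is handled exactly as in Proposition~\ref{prop:prox_op_bregman}: the entropic term makes the $\bM$-minimization a Sinkhorn-type problem whose optimal $\bM$ is $\mathrm{diag}(\bv)(\bK\odot\bW)$ with $\bv=\exp(\blambda/\epsilon)$, $\bK=\exp(-\bC/\epsilon)$, $\bW=\exp(\bPsi/\epsilon)$, and where the dual variable $\bPsi$ associated with the $\norm{\cdot}_{\infty,1}$ penalty lives in the dual-norm ball $\norm{\bPsi}_{1,\infty}\leq\eta$ (Fenchel conjugate of the mixed norm). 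Substituting the optimal $\bM$ back into the Lagrangian produces the $\epsilon$-dependent term $\stepsize\zeta\epsilon\langle\bK\odot\bW,\bv\onevec_G^T\rangle$, and substituting the optimal $\br$ produces the term $\tfrac12((\abs{\bu}-\stepsize\beta\onevec_F-\blambda)_+)^2$; together these give exactly the dual objective \eqref{eq:prox_dual_problem_abs}. Uniqueness of the prox follows because the primal objective is strictly convex in $\bmu$ (the quadratic $\tfrac12\norm{\bmu-\bu}_2^2$ dominates), so the argmin is a singleton.

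The main obstacle I anticipate is bookkeeping rather than anything deep: one must carefully verify that the inequality constraint (as opposed to the equality constraint in the stochastic case) produces a \emph{nonnegative} multiplier $\blambda\geq 0$ and that this sign is consistent with $\blambda$ entering the clipping operation as $(|u_f|-\stepsize\beta-\lambda_f)_+$ — i.e.\ that larger required transport mass pushes $r_f$ down, matching the earlier remark after \eqref{eq:det_estimator_hidden} that convexity (and the correct direction of the inequality) is what makes $\detreg$ well-behaved. A secondary subtlety is justifying the min-swap: since $D$ forces $\bM>0$ the $\bM$-minimization is unconstrained and smooth, and the outer problem in $(\br,\blambda)$ is a convex-concave saddle point with compact effective domain for $\bPsi$, so Sion's minimax theorem applies. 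Once these points are pinned down, the remaining computations (the one-dimensional soft-threshold, the Sinkhorn-form minimizer, and the Fenchel identity for $\norm{\cdot}_{\infty,1}$) are routine and parallel the proof of Proposition~\ref{prop:prox_op_bregman}.
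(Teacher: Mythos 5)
Your proposal is correct and follows essentially the same route as the paper, which proves this result simply by re-running the Lagrangian-duality argument of Proposition~\ref{prop:prox_op_bregman} (dualizing the mixed-norm term onto the ball $\|\bPsi\|_{1,\infty}\leq\eta$, eliminating the primal variables via the zero-gradient/entropic Sinkhorn-form equations, and back-substituting to get the dual); your explicit additions --- phase alignment of $\bmu$ with $\bu$, the nonnegative multiplier $\blambda$ for the inequality constraint $\bM\onevec_G\geq\abs{\bmu}$, and the resulting soft-threshold --- are precisely the modifications the paper leaves implicit. The only point to pin down is the bookkeeping you yourself flagged: the paper scales the Lagrangian by $1/(\stepsize\zeta)$ (as in the proof of Proposition~\ref{prop:prox_op_bregman}), and the factors of $\stepsize\zeta$ on $\blambda$ and $\bPsi$ must be carried consistently through both the $\bM$-elimination (so that $\bv=\exp(\blambda/\epsilon)$, $\bW=\exp(\bPsi/\epsilon)$ match Proposition~\ref{prop:prox_op_bregman}) and the soft-threshold step.
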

The result can be proved in the same way as Proposition~\ref{prop:prox_op_bregman}. Furthermore, \eqref{eq:prox_dual_problem_abs} can be solved using a block-coordinate descent scheme that by \cite[Thm. 2.1]{luo1992convergence} enjoys linear convergence. The explicit iteration updates are summarized in the following proposition.

\begin{proposition}
Consider the iterates
\begin{align*}
    \blambda^{(k+1)} &= \frac{1}{\stepsize\zeta}\abs{\bu} -\frac{\beta}{\zeta}\onevec_F - \epsilon\wrightOmega\left(\bxi^{(k)}\right)\\
    \bPsi^{k+1} &= \argminwrt[\bPsi: \| \bPsi \|_{1,\infty} \leq \eta]  \left\langle (\bv^{k+1} \onevec_G^T) \odot \bK, \exp\left(\frac{1}{\epsilon} \bPsi\right) \right\rangle.
\end{align*}
where
\begin{align*}
    \bxi^{(k)} &= \log\left((\bK \odot \bW^{(k)})\onevec_G\right) + \frac{\abs{\bu} - \stepsize\beta\onevec_F}{\stepsize\zeta\epsilon} - \log\left( \stepsize\zeta\epsilon \onevec_F\right),
\end{align*}
and $\wrightOmega(\cdot)$ denotes the (elementwise) Wright omega function \cite{Corless2002}. Then, the iterates converge linearly to a solution of \eqref{eq:prox_dual_problem_abs}.
\end{proposition}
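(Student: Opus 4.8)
The plan is to follow the architecture of the proof of Proposition~\ref{prop:stochastic_updates}: verify that the two displayed updates are the exact block-minimizers of the (convex) dual objective in \eqref{eq:prox_dual_problem_abs} over the two blocks $\blambda$ and $\bPsi$, and then appeal to \cite[Thm.~2.1]{luo1992convergence} for the linear rate. As a preliminary I would observe that the objective of \eqref{eq:prox_dual_problem_abs} is jointly convex and continuously differentiable in $\blambda$ --- the only non-smooth-looking ingredient, $t \mapsto \tfrac12\big((t)_+\big)^2$, is in fact $C^1$ with derivative $(t)_+$ --- and that the feasible set is the Cartesian product of $\RR_+^F$ (for $\blambda$) and $\{\bPsi : \|\bPsi\|_{1,\infty} \le \eta\}$, so the two-block Gauss--Seidel scheme is well defined and each sweep is non-increasing.

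For the $\blambda$-block I would freeze $\bPsi$ (equivalently $\bW^{(k)}$, writing $m_f \triangleq [(\bK\odot\bW^{(k)})\onevec_G]_f$). Using $\langle \bK\odot\bW,\bv\onevec_G^T\rangle = \langle \bv,(\bK\odot\bW)\onevec_G\rangle$, the objective decouples across the rows $f$, each row being a strictly convex one-dimensional function of $\lambda_f$ combining an exponential term proportional to $m_f e^{\lambda_f/(\stepsize\zeta\epsilon)}$ with the quadratic hinge $\tfrac12\big(([\abs{\bu}]_f - \stepsize\beta - \lambda_f)_+\big)^2$. Setting the derivative to zero, on the branch where the soft-threshold is active the stationarity condition takes the form $m_f e^{\lambda_f/(\stepsize\zeta\epsilon)} = (\text{an affine function of }\lambda_f)$; substituting $\lambda_f = \tfrac{[\abs{\bu}]_f - \stepsize\beta}{\stepsize\zeta} - \epsilon w_f$ converts this into $w_f + \log w_f = [\bxi^{(k)}]_f$, with $[\bxi^{(k)}]_f = \log m_f + \tfrac{[\abs{\bu}]_f - \stepsize\beta}{\stepsize\zeta\epsilon} - \log(\stepsize\zeta\epsilon)$ exactly as in the statement, so that $w_f = \wrightOmega\big([\bxi^{(k)}]_f\big)$ by the defining identity $\wrightOmega(z) + \log\wrightOmega(z) = z$ of the Wright omega function \cite{Corless2002}. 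Back-substituting gives the stated $\blambda^{(k+1)}$; I would additionally check that, because $\wrightOmega > 0$ and the exponential term dominates, this single closed form also delivers the correct minimizer (and remains feasible) in the degenerate cases where the hinge is inactive or the $\blambda \ge 0$ constraint binds, so that no case split is needed in the final formula.

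For the $\bPsi$-block, freeze $\blambda$ (with $\bv^{k+1}$ formed from $\blambda^{k+1}$ as in Proposition~\ref{prop:prox_op_bregman}); the resulting subproblem $\min_{\|\bPsi\|_{1,\infty}\le\eta}\big\langle(\bv^{k+1}\onevec_G^T)\odot\bK,\exp(\tfrac{1}{\epsilon}\bPsi)\big\rangle$ is column-separable and is precisely the subproblem appearing in Proposition~\ref{prop:stochastic_updates}, solved in closed form up to a single sort by \cite[Thm.~2]{Haasler2024}; this is the stated $\bPsi^{k+1}$. Finally, for linear convergence I would check the hypotheses of \cite[Thm.~2.1]{luo1992convergence}: the feasible set is the product of the two blocks' nonempty, closed, polyhedral constraint sets; the objective is convex with Lipschitz gradient on each sublevel set, in which the iterates remain by monotonicity; and it is of the composite form (sum of exponentials of affine maps, plus a squared-hinge of an affine map, plus a linear term), with the entropic structure supplying the local error bound that theorem requires. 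The main obstacle is the $\blambda$-block: unlike the stochastic case, whose stationarity equation is affine in the logarithmic variables and solves directly, here the condition genuinely couples an exponential and an affine term, so the crux is the change of variable exposing the Wright-omega structure and the verification that the piecewise $(\cdot)_+$ nonlinearity does not break the single closed-form expression; a secondary point is that the $\blambda$-block is unbounded from above, so the error bound underpinning \cite[Thm.~2.1]{luo1992convergence} must be argued on a sublevel set rather than globally.
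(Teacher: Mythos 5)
Your proposal is correct and takes essentially the same route as the paper, whose own ``proof'' is only a brief remark that the updates follow by algebraic manipulation of the first-order optimality conditions of \eqref{eq:prox_dual_problem_abs} (with the Wright omega function arising exactly from the exponential-versus-affine stationarity equation you describe, cf.\ \cite{KarlssonRingh17_gen}), the $\bPsi$-block handled as in Proposition~\ref{prop:stochastic_updates} via \cite[Thm.~2]{Haasler2024}, and linear convergence via \cite[Thm.~2.1]{luo1992convergence}. Your write-up is in fact more detailed than the paper's (hinge/nonnegativity edge cases, sublevel-set argument for the error bound); the only caution is to keep the dual-variable scaling consistent with the $1/(\stepsize\zeta)$-scaled Lagrangian used in the proof of Proposition~\ref{prop:prox_op_bregman}, since with the exponent $\exp\left(\blambda/(\stepsize\zeta\epsilon)\right)$ you wrote, stray $\stepsize\zeta$ factors would appear and the stated $\bxi^{(k)}$ and $\blambda^{(k+1)}$ would not come out exactly.
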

The expressions are obtained by straight-forward algebraic manipulations of the first-order optimality conditions of \eqref{eq:prox_dual_problem_abs} (see \cite{KarlssonRingh17_gen} for a treatment on the Wright-omega function in proximal operators).
\begin{remark}
A result analogous to Proposition~\ref{prop:maximum_entropy} holds also for the deterministic model, i.e., the convergence, as $\epsilon \to 0$, to the maximum entropy solution of the unregularized problem, with the proof just requiring a minor modification.   
\end{remark}
\subsection{Debiasing of $\bnu$ and $\bmu$}
\label{sect:dist_OT}
Once the problems in \eqref{eq:stoch_estimator_hidden} or \eqref{eq:det_estimator_hidden} have been solved, we can identify the pitch components by inspecting the support of $(\bM^\star)^T\onevec_F$, where $\bM^\star$ is the optimal transport plan. However, it should be noted that the corresponding $\bnu$ and $\bmu$ will be biased towards zero due to the sparsity-inducing mixed-norm term. In order to alleviate some of this bias, we can restrict the grid $\Omega_G$ to contain only the identified pitches. We may then re-solve \eqref{eq:stoch_estimator_hidden} and \eqref{eq:det_estimator_hidden} with $\eta = 0$. In fact, it may be readily verified that the expressions in Proposition~\ref{prop:stochastic_updates} hold for this case also, where we note that the iterations converge in one step as $\bPsi \equiv 0$. Then, letting $\epsilon \to 0$, it follows directly from Proposition~\ref{prop:maximum_entropy} that the solution of \eqref{eq:main_problem_spectrum} converges to the maximum-entropy solution of
\begin{align}
    \label{eq:re_est_spec_final}
    \minwrt[\bnu \in \RR^F_+] \quad &\frac{1}{\Tlag} \norm{\bhr - \bA \bnu}_2^2 + \beta \bnu^T\onevec_F+\zeta \bc_{\mathrm{min}}^T\bnu,
\end{align}
where $\bc_{\mathrm{min}} \in \RR^F$ is the vector constructed as $[\bc_{\mathrm{min}}]_k = \min_{\ell} [\bC]_{k,\ell}$. This is a quadratic program with non-negativity constraint and may be solved either using Proposition~\ref{prop:prox_op_bregman} or off-the-shelf solvers. Furthermore, it may be noted that \eqref{eq:re_est_spec_final} is a discretized version of \eqref{eq:non_sparse_stoch_ot} (with fixed $\bomega_0$) with the additional $\beta$-term for suppressing noise. Similarly, for the deterministic model in \eqref{eq:main_problem}, one may use Propostion~\ref{prop:prox_op_abs} as $\epsilon \to 0$ or consider the limiting problem
\begin{align}
    \label{eq:re_est_dir_final}
    \minwrt[\bmu \in \CC^F] \quad &\frac{1}{N} \norm{\by - \bA \bmu}_2^2 + \beta \norm{\bmu}_1+\zeta \bc_{\mathrm{min}}^T\abs{\bmu}.
\end{align}
Note here that the last term implies that any overlapping point-masses are phase-aligned. Furthermore, it may be noted that this can be understood as a LASSO problem with a non-uniformly weighted $\ell_1$-norm. As both estimators have now been obtained, in the following section we conduct experiments on both simulated and real data.
\section{Numerical results}
We first evaluate the proposed methods on simulated data, followed by real data; the next section presents the results for the simulated data, and the following covers the real data.
\subsection{Simulated data}
\begin{figure}[t]
    \centering
    \includegraphics[width = \linewidth]{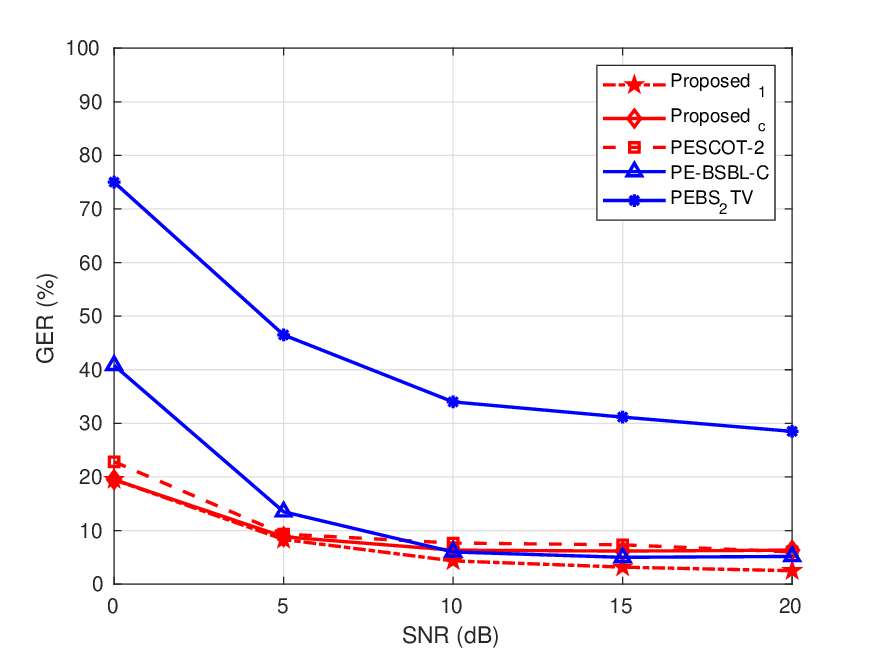}
    \caption{GER for simulated data, 4 pitches, with varying SNR.}  
    \label{fig:ger_SNR}\vspace{-1mm}
\end{figure}
We first conduct a Monte Carlo simulation study in which each signal consists of $4$ pitches, sampled at $8000$ Hz and observed over $250$ samples. The number of harmonics per pitch is randomly selected as an integer between $3$ and $10$, the magnitudes are set to $1$ and the initial phase is drawn uniformly at random, similar to what was done in \cite{shi2018multipitch, ELVANDER201656}. The \textit{nominal} fundamental frequencies are $176, 197, 240$ and $272$ Hz, respectively, and are perturbed in each simulation to avoid bias from grid effects. Unless otherwise stated, the signal-to-noise ratio (SNR) is set to $5$ dB, with the SNR defined as $\mathrm{SNR} = 10\log_{10}\left( (\sum_{k=1}^{K}\sum_{l=1}^{L^{(k)}}|\alpha_{l}^{(k)}|^2) / \sigma^2 \right)$, and all harmonics up to maximum harmonic order are non-zero. In order to evaluate the pitch estimation with respect to inharmonicity, we let the inharmonicity parameter $\kappa$ determine how much the harmonics of the \textit{nominal} fundamental frequency are allowed to deviate from the harmonic model. The harmonics $\omega$ are drawn uniformly at random in the interval $\omega \pm \kappa\omega$, where $\kappa\in [0,1]$. Here, again unless otherwise stated, the inharmonicity is set to $0$. The performance is evaluated in terms of the gross error rate (GER), defined as the percentage of pitch estimates that deviate by more than $50$ cents\footnote{This corresponds to an error larger than a quarter note.} from the reference pitch, i.e.,
\begin{align*}
    \text{GER} = \frac{1}{K}\sum_{k=1}^K \left[\left|1200\cdot \log_2\left(\frac{\hat{\omega}_0^{(k)}}{\omega_0^{(k)}}\right) \right| > 50 \right],
\end{align*}
where $\hat{\omega}_k$ is the pitch estimate \cite{Cheveigne2002YIN,shi2018multipitch}. We here denote the model from our previous work in \cite{bjorkman2025robust} as PESCOT-$2$, where the $2$ comes from its similarity to the problem in \eqref{eq:otdist}, with the difference being that the constraint is squared. For PE-BSBL-C \cite{shi2018multipitch} and PEBS$_2$TV \cite{ADALBJORNSSON2015236}, the maximum number of assumed harmonics of the signal is set to $10$, while for the proposed methods as well as PESCOT-$2$, the order is unrestricted. The proposed methods are denoted as Proposed$_1$, corresponding to the estimator of the deterministic model, and Proposed$_{c}$, corresponding to the estimator of the stochastic model. For all methods, the signal is normalized as to have unit variance.

Figure~\ref{fig:ger_SNR} displays results for varying levels of noise in the perfectly harmonic case. As can be seen, all methods save PEBS$_2$TV perform similarly in the high SNR regime.
Figure~\ref{fig:ger_inharm} shows the impact of inharmonicity, where a fixed SNR$~=~5$ dB is used. Although the results seem to imply that all methods are similarly robust to inharmonicity, a more nuanced picture emerges when studying the structure of the spectral estimates themselves, i.e., $\bnu$ and $\abs{\bmu}$, and not only the identified fundamental frequencies. To quantify this, we use the Wasserstein-2 distance \cite{villani2009optimal} between estimated and ground truth spectra. In particular, the Wasserstein-2 distance is defined as the square-root of the minimal objective of \eqref{eq:basic_ot}, where $c(\omega_1,\omega_2) = (\omega_1-\omega_2)^2$. Using this distance offers an alternative to, e.g., $\ell_2$-norms, as it is smooth with respect to perturbations such as peak shifts or splitting \cite{georgiou2008metrics}. It should be stressed that this distance is not related to the transport problems used to define the proposed estimators. The upper subplot in Figure~\ref{fig:OT_freq_id} shows the cumulative distribution function (CDF) of the Wasserstein-2 distance computed over 150 simulated signals, with inharmonicity 2\% and SNR 5 dB, while the lower subplot shows the mean of these distances with varying inharmonicity. As may be noted, the OT-based methods yield spectral estimates concentrated closer to the ground truth as compared to the reference methods.
\begin{figure}[t]
    \centering
    \includegraphics[width = \linewidth]{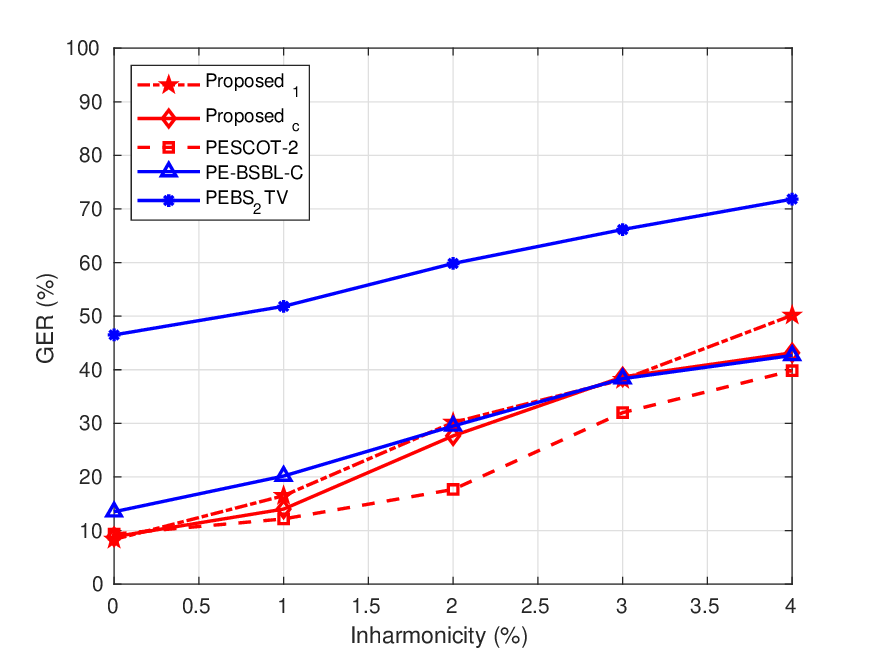}
    \caption{GER for simulated data, 4 pitches, with varying inharmonicity. The SNR is set to $5$ dB.}  
    \label{fig:ger_inharm}\vspace{-1mm}
\end{figure}
\begin{figure}[t]
    \centering
    \includegraphics[width = \linewidth]{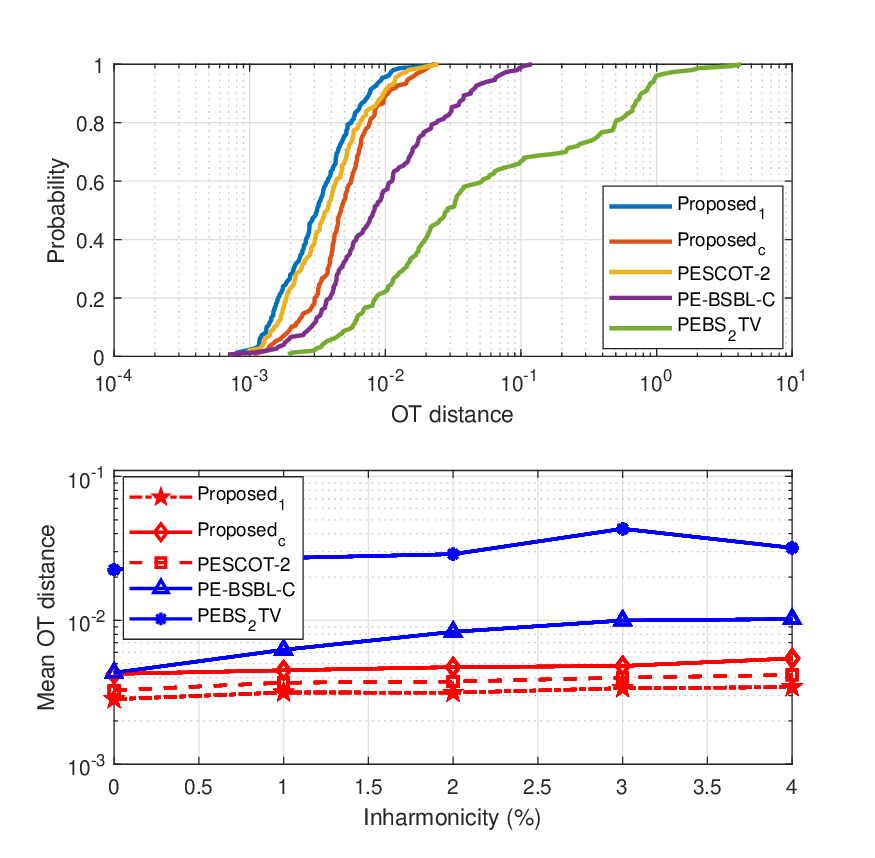}
    \caption{In the upper subplot, the empirical CDF of the OT distance between the estimated amplitude spectrum and the real amplitude spectrum is shown, with inharmonicity set to $2\%$. In the lower subplot, the median OT distance is shown as a function of the inharmonicity. The SNR for both plots is set to $5$, and the signals used are simulated with $4$ pitches.}  
    \label{fig:OT_freq_id}\vspace{-1mm}
\end{figure}
\begin{figure}[t]
    \centering
    \includegraphics[width = 0.95\linewidth]{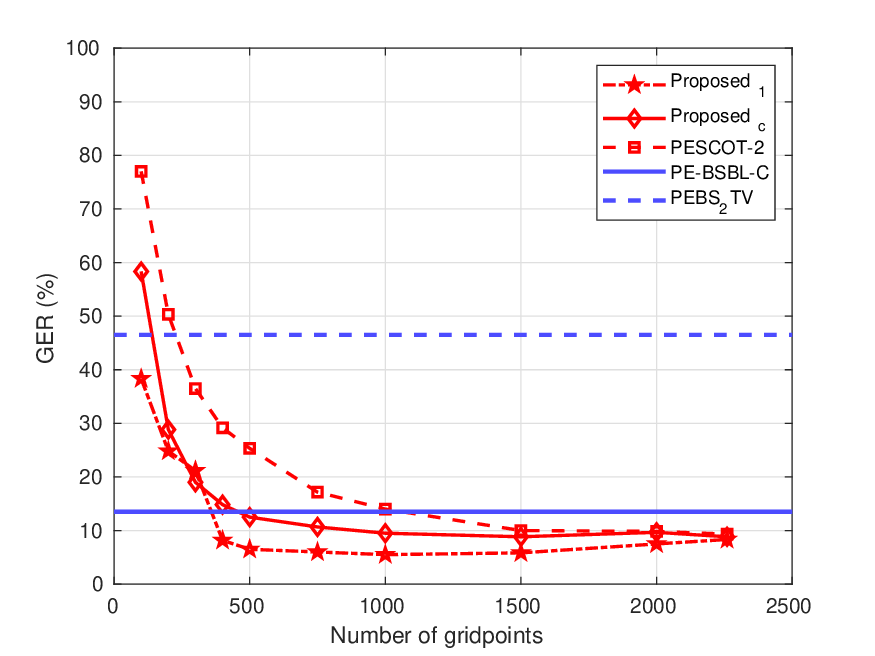}
    \caption{GER for simulated data, 4 pitches, with varying number of uniformly spaced grid-points from $50$ to $500$ Hz used by the proposed methods and PESCOT-2. The SNR is set to $5$, and the inharmonicity is set to $0$\%. }  
    \label{fig:ger_gridp}\vspace{-1mm}
\end{figure}
In order to evaluate the impact of the frequency grid resolution, in Figure \ref{fig:ger_gridp} we vary the total number of uniformly spaced grid points for the proposed methods and PESCOT-$2$. For all other experiments, this number is set to match the number of grid points for the reference methods, $2260$ grid points. The frequency grid for the reference methods is based on integer multiples, up to the maximum allowed number of harmonics, of the pitch grid, while reference methods, and PESCOT-$2$, have the flexibility of choosing any frequencies for its grid, allowing for reduced complexity for each iteration. As shown, somewhere between $1000$ and $1500$ uniformly spaced grid points appear to be sufficient for the experiment, with most notably Proposed$_1$ being able to achieve a similar performance with around $400$ grid points as it does with the full $2260$ grid points, greatly reducing the complexity of the iterations in the simulated setting. This concludes all the experiments on simulated data, and in the following section, we move on to evaluating the estimators on real data.
\subsection{Real data}
\begin{figure}[t]
    \centering
    \includegraphics[width = \linewidth]{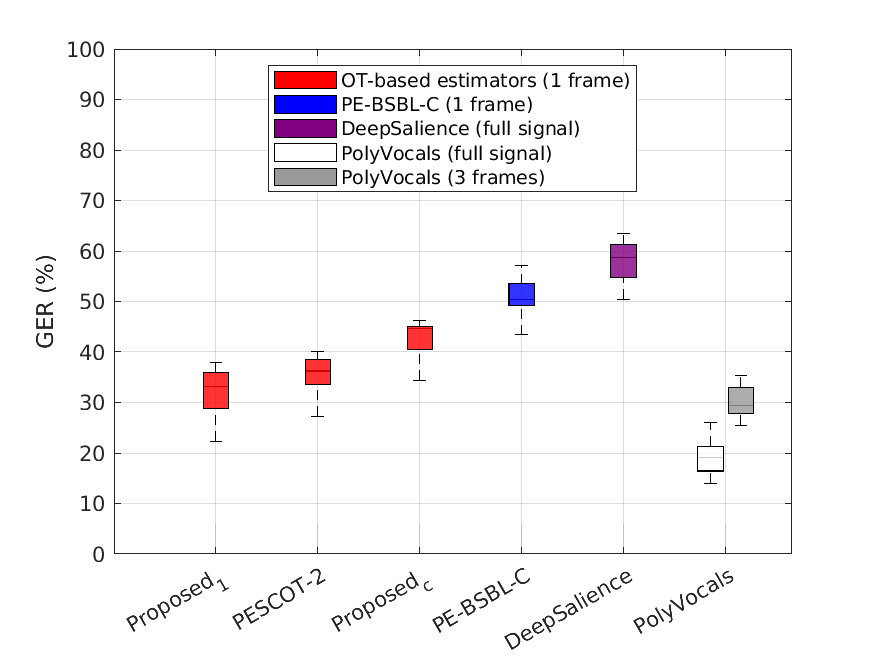}
    \caption{Boxplot of the GER for the Bach10 data. The proposed methods as well as PESCOT-$2$ and \text{PE-BSBL-C} use one frame per estimate, while the estimates in purple and white use the entire signal. The estimates in gray use 3 frames per estimate.}  
    \label{fig:ger_bach_full}\vspace{-1mm}
\end{figure}

The next experiment is based on the Bach10 dataset used in \cite{duan2010multiple}, where the pitch of 10 music pieces is evaluated. For each piece, the sampling rate is $44100$ Hz. Here, the ground truth fundamental frequencies were obtained through the single-pitch YIN estimator, which was used for each channel of the recording, corresponding to different instruments, separately, and was manually corrected in the case of obvious errors. The maximum number of assumed harmonics of the signal is set to $10$ for PE-BSBL-C, while for the network-based methods DeepSalience \cite{bittner2017deep} and PolyVocals \cite{cuesta2020multiple}, they were both trained on a maximum harmonic order of 5. For PolyVocals, there are several models available, where the one used here is the recommended model denoted as Deep/Late in the paper. For the proposed methods and PESCOT-$2$, the maximum order is unrestricted.

As is shown in Figure \ref{fig:ger_bach_full}, all the proposed methods outperform the reference methods except PolyVocals. For the non-network methods, the signal is divided into $30$ms long frames of music ($1323$ samples), where each frame is evaluated independently of the others. Although this frame-wise setup can also be used with the network-based methods, they failed to produce meaningful estimates with such short input signals and were therefore not added to the figure. However, for PolyVocals, we also evaluate the performance when the entire input signal is split into non-overlapping sequences of \textit{three} consecutive frames, i.e., $90$ms long segments (3969 samples). The result for this is shown in the gray box. For the white box, the entire signal is used. With this we note that PolyVocals does in fact outperform all the proposed methods; however, this is achieved by utilizing three times as much information as the proposed methods.\\
\\
The hyperparameters for the reference methods were set according to the recommendations in their respective papers. The hyperparameters for the proposed methods and PESCOT-$2$ were chosen through empirical tuning, and can be seen in Table~\ref{tab:sim_data} and Table~\ref{tab:real_data}, for the simulated and real data, respectively. The number of delays used for the auto-covariance was set to $\Tlag = \frac{2}{3}N$ for all data.

\captionsetup[table]{skip=4pt}
\begin{table}[t]
\centering
\begin{tabular}{ |p{1.4cm}|p{1.4cm}|p{1.4cm}|p{2cm}|  }
 \hline
 \multicolumn{4}{|c|}{Simulated data} \\
 \hline
 Parameter & Proposed$_1$ & PESCOT-$2$ & Proposed$_{\text{c}}$ \\
 \hline
 $\eta$     & $3\cdot10^{-2}$   & $1 \cdot 10^{0}$   &   $1\cdot 10^{-1}$ \rule{0pt}{8pt}\\
 $\zeta$    & $6\cdot 10^{0}$   & $8\cdot10^{3}$   & $1\cdot10^1$ \\
 $\epsilon$ & $1\cdot10^{-5}$   & $1\cdot 10^{-7}$   & $1\cdot10^{-6}$\\
 $\beta$    & $8\cdot 10^{-2}$   & $8\cdot 10^{-2}$ &  $1.5\cdot10^{-2}$\\
 \hline 
 \multicolumn{4}{|c|}{Debiased estimate} \\
 \hline
 $\zeta$    & $6\cdot 10^{0}$   & $8\cdot10^{3}$   & $1\cdot10^0$ \\
 $\beta$    & $4\cdot 10^{1}$   & $4.8\cdot 10^{1}$ &  $4.5\cdot10^{-2}$\\
 \hline
\end{tabular}
\caption{Parameter values for all proposed methods, for all the simulated data.}
\label{tab:sim_data}
\end{table}

\begin{table}[t]
\centering
\begin{tabular}{ |p{1.4cm}|p{1.4cm}|p{1.4cm}|p{2cm}|  }
 \hline
 \multicolumn{4}{|c|}{Real data} \\
 \hline
 Parameter & Proposed$_1$ & PESCOT-$2$ & Proposed$_{\text{c}}$ \\
 \hline
 $\eta$     & $5\cdot10^{-3}$   & $4\cdot 10^{-2}$   &   $1\cdot 10^{-2}$ \rule{0pt}{8pt}\\
 $\zeta$    & $1\cdot 10^{-1}$   & $8\cdot10^{0}$   & $1\cdot10^{-1}$ \\
 $\epsilon$ & $1\cdot10^{-5}$   & $1\cdot 10^{-7}$   & $1\cdot10^{-5}$\\
 $\beta$    & $2.8 \cdot10^{-2}$   & $3.6 \cdot 10^{-2}$&  $3.4\cdot10^{-2}$\\
 \hline 
\end{tabular}
\caption{Parameter values for all proposed methods, for the real data.}
\label{tab:real_data}
\end{table}

\section{Conclusion}

This work addresses the problem of estimating pitch in multi-pitch signals. We show that the proposed methods achieves state-of-the-art performance on both simulated and real data in a frame-based setting setting, while, unlike existing non-network-based methods, it does not require prior knowledge of the harmonic order of the signal. The performance is surpassed only by network-based estimators. It should, however, be noted that those rely on access to a larger portion of the signal, while our approach operates in a setting where only a short portion of a signal is measured, not utilizing any previous information of the signal, and is better viewed as a building block for a recursive estimator, utilizing all or some of the prior measurements that are available.
\section{appendix}
\vspace{-1mm}
\begin{proof}[Proof of Proposition~\ref{prop:prox_op_bregman}]
Introducing the auxiliary variable $\bQ = \RR^{F \times G }$, the minimization problem can equivalently be stated as
\begin{equation*}
\begin{aligned}
        \minwrt[\bnu,  \bM, \bQ] \quad & \stepsize\langle \bu, \bnu \rangle + \stepsize\zeta\langle \bC, \bM \rangle + \stepsize\zeta\epsilon D(\bM) \\
        &+\stepsize\zeta\eta\|\bQ\|_{\infty,1} + \mathcal{D}_{KL}(\bnu, \bnu^{(j)}) \\
    \text{s.t. } \quad & \bnu = \bM\onevec_F, \quad \bQ = \bM.
\end{aligned}
\end{equation*}
The corresponding Lagrangian (scaled by $1/\stepsize\zeta$) is
\begin{align*}
        &\mathcal{L}(\bnu, \bM, \bQ, \blambda, \bPsi) = \\
         &\frac{1}{\zeta}\langle \bu, \bnu \rangle + \left\langle \bC, \bM \right\rangle + \epsilon D(\bM) + \eta \| \bQ \|_{\infty,1}\\
        & + \frac{1}{\stepsize\zeta}\mathcal{D}_{KL}(\bnu, \bnu^{(j)}) + \left\langle  \blambda, \bnu - \bM \onevec_F \right\rangle + \left\langle \bPsi, \bQ - \bM \right\rangle ,
    \end{align*}
where $\blambda$ and $\bPsi$ are (scaled) dual variables.
It may be readily verified (see \cite[Thm. 1]{Haasler2024}) that the infimum of $\mathcal{L}$ respect to $\bQ$ is finite if and only if $\norm{\bPsi}_{1,\infty} \leq \eta$, in which case the terms related to $\bQ$ vanish. Furthermore, it may be noted that $\mathcal{L}$ is strictly convex with respect to $\bM$ and $\bnu$. The expressions \eqref{eq:optimal_nu} and \eqref{eq:optimal_stoch_M} are obtained by differentiating with respect to $\bnu$ and $\bM$ and solving the corresponding zero-gradient equations. The dual problem in \eqref{eq:prox_dual_problem_cov} is obtained by plugging these expressions into $\mathcal{L}$.
\end{proof}
%
\begin{proof}[Proof of Proposition~\ref{prop:maximum_entropy}]
Define the set
\begin{align*}
    U_\rho = \{ \bM \in \RR_+^{F\times G} \mid \norm{ \hat{\br} - \bA(\bM \onevec_G)}_2^2 \leq \rho  \},
\end{align*}
parametrized by $\rho > 0$. As $\bM$ is constrained to the positive quadrant and as the first row of $\bA$ is all ones (corresponding to $\tau = 0$), the set $U_\rho$ is bounded. By the continuity of $\bA$ and the norm $\norm{\cdot}$, $U_\rho$ is closed. Thus, $U_\rho$ is compact. Consider the problem
\begin{align} \label{eq:prop_proof_problem}
    \minwrt[\bM \in U_p] \quad f(\bM) + \epsilon D(\bM) 
\end{align}
where
\begin{align*}
    f(\bM) = \langle \bC, \bM \rangle + \beta \onevec_F^T\bM\onevec_G + \eta \norm{\bM}_{\infty,1}.
\end{align*}
Let $\{ \epsilon_p \}_p$ be a sequence going to zero, and let $\{ \bM_p \}_p$ be the corresponding sequence of solutions to \eqref{eq:prop_proof_problem}. Then, as $U_p$ is compact, there exists a convergent subsequence $\{\bM_q\}_q$ converging to some $\bM^\star \in U_\rho$. Let $\bM_o$ be a solution to \eqref{eq:prop_proof_problem} for $\epsilon = 0$. Then, for any $\epsilon_q$,
\begin{align*}
    f(\bM_q) + \epsilon_q D(\bM_q)  \leq f(\bM_o) + \epsilon_q D(\bM_o).
\end{align*}
Furthermore, $f(\bM_o) \leq f(\bM_q)$. Taken together,
\begin{align*}
    0 \leq f(\bM_q) - f(\bM_o) \leq \epsilon_q\left( D(\bM_o) - D(\bM_q) \right).
\end{align*}
As $f$ and $D$ are continuous, we by letting $q\to \infty$ get that the right-hand side goes to zero and $\lim_{q\to\infty}f(\bM_q) = f(\lim_{q\to\infty} \bM_q) = f(\bM^\star) = f(\bM_o)$. Thus, $\bM^\star$ solves \eqref{eq:prop_proof_problem} with $\epsilon = 0$. Furthermore, we have directly that $D(\bM^\star) \geq D(\bM_o)$ for any non-regularized solution $\bM_o$. By the strict convexity of $D$, uniqueness of this maximum-entropy solution follows. Lastly, for any $\zeta > 0$, there exists a corresponding $\rho > 0$ defining $U_\rho$, concluding the proof.
\end{proof}

\bibliographystyle{./IEEEtran}
\bibliography{./IEEEabrv,./refs}
\end{document}